\newcommand{\ID}{\mathbb{D}}
\newcommand{\IE}{\mathbb{E}}
\newcommand{\IL}{\mathbb{L}}
\newcommand{\IN}{\mathbb{N}}
\newcommand{\IP}{\mathbb{P}}
\newcommand{\IQ}{\mathbb{Q}}
\newcommand{\IR}{\mathbb{R}}
\newcommand{\R}{\mathbb{R}}
\newcommand{\cA}{\mathcal{A}}
\newcommand{\cE}{\mathcal{E}}
\newcommand{\cF}{\mathcal{F}}
\newcommand{\cG}{\mathcal{G}}
\newcommand{\cH}{\mathcal{H}}
\newcommand{\cK}{\mathcal{K}}
\newcommand{\cS}{\mathcal{S}}
\newcommand{\ud}{\mathrm{d}}
\newcommand{\uds}{\mathrm{d}s}
\newcommand{\udt}{\mathrm{d}t}
\newcommand{\udu}{\mathrm{d}u}
\newcommand{\udws}{\mathrm{d}W_s}
\newcommand{\1}{\textbf{1}}
\newcommand{\ti}{{t_{i}}}
\newcommand{\tip}{{t_{i+1}}}
\newcommand{\bit}{\begin{itemize}}
\newcommand{\eit}{\end{itemize}}
\theoremstyle{plain}
\newtheorem{theorem}{Theorem}[section]
\newtheorem{lemma}[theorem]{Lemma}
\newtheorem{remark}[theorem]{Remark}
\newtheorem{example}[theorem]{Example}
\newtheorem{corollary}[theorem]{Corollary}
\begin{document}

\title{Results on numerics for FBSDE with drivers of quadratic growth}
\author{\normalsize Peter Imkeller \\
        \small  Dept. of Mathematics  \\
        \small  Humboldt University Berlin \\
        \small  Unter den Linden 6\\
        \small  10099 Berlin \\
        \small  imkeller@math.hu-berlin.de
           \and
        \normalsize Gon\c calo Dos Reis \\
        \small  CMAP \\
        \small  \'Ecole Polytechnique \\
        \small  Route de Saclay \\
        \small  91128 Palaiseau Cedex \\
        \small  dosreis@cmap.polytechnique.fr
           \and
        \normalsize  Jianing Zhang \\
        \small  Dept. of Mathematics  \\
        \small  Humboldt University Berlin \\
        \small  Unter den Linden 6\\
        \small  10099 Berlin \\
        \small zhangj@math.hu-berlin.de
\vspace*{0.8cm}}

\maketitle

\abstract{We consider the problem of numerical approximation for
forward-back\-ward stochastic differential equations with drivers of
quadratic growth (qgFBSDE). To illustrate the significance of
qgFBSDE, we discuss a problem of cross hedging of an insurance
related financial derivative using correlated assets. For the
convergence of numerical approximation schemes for such systems of
stochastic equations, path regularity of the solution processes is
instrumental. We present a method based on the truncation of the
driver, and explicitly exhibit error estimates as functions of the
truncation height. We discuss a reduction method to FBSDE with
globally Lipschitz continuous drivers, by using the Cole-Hopf
exponential transformation. We finally illustrate our numerical
approximation methods by giving simulations for prices and optimal
hedges of simple insurance derivatives.}

\medskip

\noindent{\bf 2000 AMS subject classifications:} Primary: 60H10; Secondary: 60H07, 65C30.

\medskip

\noindent{\bf Key words and phrases:} backward stochastic differential
equation; BSDE; forward-backward stochastic differential equation;
FBSDE; driver of quadratic growth; utility maximization; exponential
utility; utility indifference; pricing; hedging; entropic risk
measure; insurance derivatives; securitization; differentiability;
stochastic calculus of variations; Malliavin's calculus; non-linear
Feynman-Kac formula; Cole-Hopf transformation; BMO martingale;
inverse H\"older inequality.


\section{Introduction}

Owing to their central significance in optimization problems for
instance in stochastic finance and insurance, backward stochastic
differential equations (BSDE), one of the most efficient tools of
stochastic control theory, have been receiving much attention in the
last 15 years. A particularly important class, BSDE with drivers of
quadratic growth, for example, arise in the context of utility
optimization problems on incomplete markets with exponential utility
functions, or alternatively in questions related to risk
minimization for the entropic risk measure. BSDE provide the
genuinely stochastic approach of control problems which find their
analytical expression in the Hamilton-Jacobi-Bellman formalism. BSDE
with drivers of this type keep being a source of intensive research.

As Monte-Carlo methods to simulate random processes, numerical
schemes for BSDE provide a robust method for simulating and
approximating solutions of control problems. Much has been done in
recent years to create schemes for BSDE with Lipschitz continuous
drivers (see \cite{04BT} or \cite{phd-elie} and references therein).
The numerical approximation of BSDE with drivers of quadratic growth
(qgBSDE) or systems of forward-backward stochastic equations with
drivers of this kind (qgFBSDE) turned out to be more complicated.
Only recently, in \cite{phd-dosReis}, one of the main obstacles was
overcome. Following \cite{04BT} in the setting of Lipschitz drivers,
the strategy to prove convergence of a numerical approximation
combines two ingredients: regularity of the trajectories of the
control component of a solution pair of the BSDE in the $L^2$-sense,
a tool first investigated in the framework of globally Lipschitz
BSDE by \cite{phd-zhang}, and a convenient a priori estimate for the
solution. See \cite{04BT}, \cite{GLW05}, \cite{delarue-menozzi} or
\cite{07BD} for numerical schemes of BSDE with globally Lipschitz
continuous drivers, and an implementation of these ideas. The main
difficulty treated in \cite{phd-dosReis} consisted of establishing
path regularity for the control component of the solution pair of
the qgBSDE. For this purpose, the control component, known to be
represented by the Malliavin trace of the other component, had to be
thoroughly investigated in a subtle and complex study of Malliavin
derivatives of solutions of BSDE. This study extends a thorough
investigation of smoothness of systems of FBSDE by methods based on
Malliavin's calculus and BMO martingales independently conducted in
\cite{AIdR07} and \cite{07BriCon}. The knowledge of path regularity
obtained this way is implemented in a second step of the approach in
\cite{phd-dosReis}. The quadratic growth part of the driver is
truncated to create a sequence of approximating BSDE with Lipschitz
continuous drivers. Path regularity is exploited to explicitly
capture the convergence rate for the solutions of the truncated BSDE
as a function of the truncation height. The error estimate for the
truncation, which is of high polynomial order, combines with the
ones for the numerical approximation in any existent numerical
scheme for BSDE with Lipschitz continuous drivers, to control the
convergence of a numerical scheme for qgBSDE.  It allows in
particular to establish the convergence order for the approximation
of the control component in the solution process.

An elegant way to avoid the difficulties related to drivers of
quadratic growth, and to fall back into the setting of globally
Lipschitz ones, consists of using a coordinate transform well known
in related PDE theory under the name ``exponential Cole-Hopf
transformation''. The transformation eliminates the quadratic growth
of the driver in the control component at the cost of
producing a transformed driver of a new BSDE which in general lacks
global Lipschitz continuity in the other component. This difficulty
can be avoided by some structure hypotheses on the driver. Once this
is done, the transformed BSDE enjoys global Lipschitz continuity
properties. Therefore the problem of numerical approximation can be
tackled in the framework of transformed coordinates by schemes well
known in the Lipschitz setting. As stated before, this again
requires path regularity results in the $L^2$-sense for the control
component of the solution pair of the transformed BSDE. For
globally Lipschitz continuous drivers \cite{phd-zhang} provides
path regularity under simple and weak additional assumptions such as
$\frac{1}{2}$-H\"older continuity of the driver in the time
variable. The smoothness of the Cole-Hopf transformation allows
passing back to the original coordinates without losing path
regularity. In summary, if one accepts the additional structural
assumptions on the driver, the exponential transformation approach
provides numerical approximation schemes for qgBSDE under weaker
smoothness conditions for the driver.

In this paper we aim to give a survey of these two approaches to
obtain numerical results for qgBSDE. Doing this, we always keep an eye
on one of the most important applications of qgBSDE, which consists
of providing a genuinely probabilistic approach to utility
optimization problems for exponential utility, or equivalently risk
minimization problems with respect to the entropic risk measure,
that lead to explicit descriptions of prices and hedges. We motivate
qgBSDE by reviewing a simple exponential utility optimization
problem resulting from a method to determine the utility
indifference price of an insurance related asset in a typical
incomplete market situation, following \cite{AIdR07} and
\cite{09Frei}. The setting of the problem allows in particular the
calculation of the driver of quadratic growth of the associated
BSDE. After discussing the problem of numerical approximations, in
this case by applying the method related to the exponential
transform, we are finally able to illustrate our findings by giving
some numerical simulations obtained with the resulting scheme.

The paper is organized as follows. In Section
\ref{section-preliminaries} we fix the notation used for treating
problems about qgFBSDE and recall some basic results. Section
\ref{section-money-applic} is devoted to presenting utility
optimization problems used for pricing and hedging derivatives on
non-tradable underlyings using correlated assets in a utility
indifference approach. In section \ref{section-path-reg-theo} we
review smoothness results for the solutions processes of qgFBSDE,
and apply them to show $L^2$-regularity of the control component
of the solution process of a qgFBSDE. In Section
\ref{section-trunc-procedure} we discuss the truncation method for
the quadratic terms of the driver to derive a numerical
approximation scheme for qgFBSDE. Section \ref{section-exp-transf}
is reserved for a discussion of the applicability of the exponential
transform in the qgBSDE setting. In Section
\ref{numerics-of-pricing-problem} we return to the motivating
pricing and hedging problem and use it as a platform for
illustrating our results by numerical simulations.

\section{Preliminaries}\label{section-preliminaries}

Fix $T\in\IR_+=[0,\infty)$. We work on the canonical Wiener space
$(\Omega, \cF,  \IP)$ on which a $d$-dimensional Wiener process $W =
(W^1,\cdots, W^d)$ restricted to the time interval $[0,T]$ is
defined. We denote by $\cF=(\cF_t)_{t\in[0,T]}$ its natural
filtration enlarged in the usual way by the $\IP$-zero sets.

Let $p\geq 2, m, d\in \IN$, $\IQ$ be a probability measure on $(\Omega,
\cF)$. We use the symbol $\IE^\IQ$ for the expectation with respect
to $\IQ$, and omit the superscript for the canonical measure $\IP$.
To denote the stochastic integral process of an adapted process $Z$
with respect to the Wiener process on $[0,T]$, we write
$Z*W=\int_0^\cdot Z_s \udws$.

For vectors $x = (x^1,\cdots, x^m)$ in Euclidean space $\R^m$ we
denote $|x| = (\sum_{i=1}^m (x^i)^2)^{\frac{1}{2}}$. In our analysis
the following normed vector spaces will play a role. We denote by

\begin{itemize}
\item $L^p(\R^m; \IQ)$ the space of $\cF_T$-measurable random
variables $X:\Omega\mapsto\R^m$, normed by $\lVert
X\lVert_{L^p}=\IE^\IQ[ \, |X|^p]^{\frac{1}{p}}$; $L^\infty$ the space
of bounded random variables;

\item  $\cS^p(\R^m)$ the space of all measurable processes $(Y_t)_{t\in[0,T]}$ with values in
$\R^m$ normed by $\| Y \|_{\cS^p} = \IE[\left( \sup_{t \in [0,T]}
|Y_t| \right)^{p}]^{\frac{1}{p}}$; $\cS^\infty(\R^m)$ the space of bounded measurable processes;

\item $\cH^p(\R^m, \IQ)$ the space of all progressively measurable processes $(Z_t)_{t\in[0,T]}$
with values in $\R^m$ normed by $\|Z\|_{\cH^p} = \IE^\IQ[\left( \int_0^T |Z_s|^2 \ud s \right)^{p/2} ]^{\frac{1}{p}};$

\item $BMO(\cF,\IQ)$ or $BMO_2(\cF,\IQ)$ the space of square integrable $\cF$-martingales $\Phi$ with $\Phi_0=0$ and we set
\[\lVert \Phi \lVert_{BMO(\cF,\IQ)}^2= \sup_{\tau}\Big\|\, \IE^\IQ\big[ \langle \Phi\rangle_T -
\langle \Phi \rangle_\tau| \cF_\tau \big]\Big\|_{\infty}< \infty,\]
where the supremum is taken over all stopping times $\tau\in[0,T]$.
More details on this space can be found in Appendix 1. In case $\IQ$ resp. $\cF$ is clear from the
context, we may omit the arguments $\IQ$ or $\cF$ and simply write
$BMO(\IQ)$ resp. $BMO(\cF)$ etc;

\item $\ID^{k,p}(\IR^d)$ and $\IL_{k,d}(\IR^d)$ the spaces of
Malliavin differentiable random variables and processes, see Appendix 2. 
\end{itemize}
In case there is no ambiguity about $m$ or $\IQ$, we may omit the
reference to $\IR^m$ or $\IQ$ and simply write $\cS^\infty$ or
$\cH^p$ etc.

We investigate systems of forward diffusions coupled with backward
stochastic differential equations with quadratic growth in the
control variable (qgFBSDE for short), i.e. given $x\in\IR^m$,
$t\in[0,T]$, and four continuous measurable functions $b$, $\sigma$,
$g$ and $f$ we analyze systems of the form
\begin{align}
\label{the-sde}
X^x_t&=x+\int_0^t b(s,X^x_s)\uds+\int_0^t \sigma(s,X^x_s)\udws,\\
\label{the-fbsde} Y^x_t&=g(X^x_T)+\int_t^T
f(s,X^x_s,Y^x_s,Z^x_s)\uds-\int_t^T Z^x_s\udws.
\end{align}
In case there is no ambiguity about the initial state $x$ of the
forward system, we may and do suppress the superscript $x$ and just
write $X, Y, Z$ for the solution components. For the coefficients of
this system we make the following assumptions: \bit
\item[{\bf (H0)}] $\quad$ There exists a positive constant $K$ such that $b,\sigma_i:[0,T]\times\IR^m\to\IR^m, 1\le i\le d,$ are uniformly Lipschitz continuous with Lipschitz constant $K$, and $b(\cdot,0)$ and $\sigma_i(\cdot,0), 1\le i\le d,$ are bounded
by $K$.

There exists a constant $M\in\IR_+$ such that $g:\R^m\to \R$ is
absolutely bounded by $M$, $f:[0,T]\times\R^m\times\R\times\R^d\to \R$ is measurable and continuous in $(x,y,z)$ and for $(t,x)\in[0,T]\times\IR^m$, $y,y'\in\IR$ and $z,z'\in\IR^d$ we have
\begin{align*}
|f(t,x,y,z)|&\leq M(1+|y|+|z|^2),\\
|f(t,x,y,z)-f(t,x,y',z')|&\leq M\Big\{|y-y'|+(1+|z|+|z'|)|z-z'|\Big\}.
\end{align*}
\eit The theory of SDE is well established. Since we wish to focus on the backward equation component of our system we emphasize that the relevant results for SDE are summarized in Appendix 3.

\begin{theorem}[Properties of qgFBSDE]\label{theo:moment-estimates-special-class}
Under {(H0)}, the system (\ref{the-sde}), (\ref{the-fbsde}) has a
unique solution $(X,Y,Z)\in \cS^2 \times \cS^\infty \times \cH^2$.
The respective norms of $Y$ and $Z$ can be dominated from above by constants depending only on $T$ and $M$ as given by
assumption {(H0)}. Furthermore
\[
Z*W=\int_0^. Z_s\udws \in BMO(\IP) \textrm{ and hence for all }p\geq 2\textrm{ one has }Z\in\cH^p.
\]
%
\end{theorem}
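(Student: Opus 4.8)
The plan is to establish existence, uniqueness, the a priori bounds, and the BMO property in a structured sequence, relying on the standard theory for quadratic BSDE (Kobylanski) combined with the SDE estimates assumed in Appendix 3. The assumptions (H0) are precisely those under which Kobylanski-type results apply: $g$ is bounded, $f$ has quadratic growth in $z$, is Lipschitz in $y$, and has the local-Lipschitz-in-$z$ structure $(1+|z|+|z'|)|z-z'|$ needed for uniqueness.

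First I would handle the forward equation. Since $b,\sigma$ are uniformly Lipschitz with linearly bounded values at the origin (H0), classical SDE theory gives a unique strong solution $X\in\cS^2$ with $\IE[\sup_{t}|X_t|^p]<\infty$ for all $p\geq 2$; I would simply cite the Appendix 3 results here. Next, for the backward equation I would fix this $X$ and treat \eqref{the-fbsde} as a BSDE with terminal condition $g(X_T)$, which is bounded by $M$, and driver $f(s,X_s,\cdot,\cdot)$ of quadratic growth. Existence of a solution $(Y,Z)\in\cS^\infty\times\cH^2$ with $Y$ bounded follows from Kobylanski's theorem via monotone approximation by Lipschitz drivers; the bound on $\|Y\|_{\cS^\infty}$ comes from a comparison argument using $|g(X_T)|\le M$ and $|f(t,x,y,z)|\le M(1+|y|+|z|^2)$, yielding an estimate depending only on $T$ and $M$. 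Uniqueness is where the second inequality in (H0) is essential: the local Lipschitz structure in $z$ lets one control the difference of two solutions once the relevant stochastic integral is known to be BMO, so uniqueness should be argued jointly with, or after, the BMO claim.

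\textbf{The main step, and the part I expect to be the crux, is the BMO bound for $Z*W$.} The natural route is an It\^o/linearization argument. Applying It\^o's formula to $|Y_t|^2$ (or to an exponential transform $e^{\alpha Y_t}$ to absorb the quadratic term), and using $Y$ bounded by some $\|Y\|_{\cS^\infty}=:C(T,M)$, I would derive for any stopping time $\tau$ an inequality of the form
\[
\IE\Big[\int_\tau^T |Z_s|^2\,\uds \,\Big|\,\cF_\tau\Big]\le C_1+C_2\,\IE\Big[\int_\tau^T |f(s,X_s,Y_s,Z_s)|\,\uds\,\Big|\,\cF_\tau\Big].
\]
Using the quadratic growth bound $|f|\le M(1+|Y|+|Z|^2)$ the right-hand side itself contains $\int_\tau^T|Z_s|^2\uds$, but the exponential transform produces a favorable sign (or a small enough coefficient) so that this term can be absorbed, leaving a bound on $\IE[\int_\tau^T|Z_s|^2\uds\mid\cF_\tau]$ that is uniform in $\tau$ and depends only on $T$ and $M$. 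Taking the supremum over $\tau$ gives exactly $\|Z*W\|_{BMO}^2<\infty$. The delicate point is controlling the constants so they depend only on $T$ and $M$ and not on the unknown solution, which is why boundedness of $Y$ must be secured first.

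\textbf{Finally}, once $Z*W\in BMO(\IP)$ is established, the higher integrability $Z\in\cH^p$ for all $p\ge2$ follows from the standard property that a BMO martingale satisfies a reverse H\"older (energy) inequality: there is a universal constant, depending only on the BMO norm, bounding $\IE[(\int_0^T|Z_s|^2\uds)^{p/2}]$. I would invoke the BMO/energy-inequality facts collected in Appendix 1 to conclude. Thus the logical order is: (i) solve and estimate $X$; (ii) obtain existence and the $\cS^\infty$ bound for $Y$ via Kobylanski; (iii) derive the BMO estimate for $Z*W$ by the exponential-transform It\^o argument, with constants depending only on $T,M$; (iv) deduce $Z\in\cH^p$ from BMO; and (v) conclude uniqueness using the local-Lipschitz-in-$z$ condition together with the BMO property. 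The BMO estimate in step (iii) is the genuine obstacle; everything else is either classical SDE theory or a citation to the quadratic-BSDE literature.
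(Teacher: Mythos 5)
The paper states this theorem without proof, treating it as a known result from the quadratic-BSDE literature (existence, uniqueness and the $\cS^\infty$ bound from \cite{00Kob}, the BMO property and the $\cH^p$ consequence from the standard exponential-transform and energy-inequality arguments collected in Appendix 1). Your proposal reconstructs exactly that standard route --- including the correct observation that the quadratic part of the driver must be absorbed by the It\^o correction term of $e^{\alpha Y}$ \emph{before} bounding the exponential weight, and that uniqueness is obtained after the BMO estimate via linearization and Girsanov --- so it is correct and matches the approach the paper relies on.
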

It is possible to go beyond the bounded terminal condition
hypothesis by imposing the existence of all its exponential moments
instead. In this case $Z*W$ is no longer in $BMO$. As we shall
see in Section \ref{section-smooth-results}, the $BMO$ property of
$Z*W$ plays a crucial role in all of our smoothness results for systems
of FBSDE. It combines with the inverse H\"older inequality for the
exponentials generated by $BMO$ martingales to control moments of
functionals of the solutions of FBSDE. Smoothness of solutions is
instrumental for instance in estimates for numerical approximations
of solutions.

\section{Pricing and hedging with correlated assets}\label{section-money-applic}

The pivotal task of mathematical finance is to provide solid
foundations for the valuation of contingent claims. In recent years,
markets have displayed an increasing need for financial instruments pegged
to non-tradable underlyings such as temperature and energy indices
or toxic matter emission rates. In the same manner as liquidly
traded underlyings, securities on non-tradable underlyings are used
to measure, control and manage risks, as well as to speculate and
take advantage of market imperfections. Since non-tradability
produces residual risks which are innate and inaccessible to
hedging, institutional investors look for tradable assets which are
correlated to the non-tradable ones. In incomplete markets, one
established pricing paradigm is the utility maximization principle.
Upon choosing a risk preference, investors evaluate contingent
claims by replicating according to an investment strategy that
yields the most favorable utility value. Interplays and connections
between the pricing of contingent claims on non-tradable underlyings
and the theory of qgFBSDE were studied, among others, by
\cite{07AIR}, \cite{06Mor}, \cite{09IRR}, and recently by
\cite{09Frei}. Based on this setup, we consider the problem of
numerically evaluating contingent claims based on non-tradable
underlyings. This will be done by intervention of the exponential
transformation of qgBSDE, to be introduced in Section
\ref{section-exp-transf}. It allows to work under weaker assumptions
than the numerical schemes for qgFBSDE based on the results reviewed
in Section \ref{section-smooth-results}, and will allow some
illustrative numerical simulations in Section
\ref{numerics-of-pricing-problem}.

\medskip

The following toy market setup can be found in Section 4 of
\cite{09Frei}. Assume $d=2$, so that $W = (W^1, W^2)$ is our basic
two-dimensional Brownian motion. We use them to define a third
Brownian motion $W^3$ correlated to $W^1$ with respect to a
correlation coefficient $\rho\in[-1,1]$ according to
\begin{align*}
W^3_s &:= \int_0^s \rho \mathrm{d}W^1_u + \int_0^s \sqrt{1-\rho^2}
\mathrm{d}W^2_u, ~~ 0 \leq s \leq T.
\end{align*}
Contingent claims are assumed to be tied to a one-dimensional non-tradable index that is subject to
\begin{align}\label{eq:nonTrad}
\mathrm{d} R_t &= \mu(t,R_t) \udt + \sigma(t,R_t) \mathrm{d}W^1_t, ~ R_0 =
r_0 > 0,
\end{align}
where $\mu,\sigma:[0,T] \times \R \to \R$ are deterministic
measurable and uniformly Lipschitz continuous functions, uniformly
of (at most) linear growth in their state variable. The securities
market is governed by a risk free bank account yielding zero
interest and one correlated risky asset whose dynamics (with respect
to the zero interest bank account \emph{num\'eraire}) are governed by
\begin{align}\label{eq:risky}
\frac{\mathrm{d} S_s}{S_s} &= \alpha(s,R_s) \uds + \beta(s,R_s)
\mathrm{d}W^3_s, ~ S_0 = s_0 > 0.
\end{align}
In compliance with \cite{07AIR}, we assume that $\alpha, \beta:
[0,T] \times \R \to \R$ are bounded and measurable functions, and
furthermore $\beta^2(t,r) \geq \varepsilon >0$ holds uniformly for
some fixed $\varepsilon>0$. Next, we set
\begin{align*}
\theta(s,r) &:= \frac{\alpha(s,r)}{\beta(s,r)}, ~~ (s,r) \in [0,T]
\times \R,
\end{align*}
and note that the conditions on $\alpha$ and $\beta$ imply that
$\theta$ is uniformly bounded.

An admissible investment strategy is defined to be a real-valued,
measurable predictable process $\lambda$ such that $\int_0^T
\lambda^2_u \udu < \infty$ holds $\IP$-almost surely and such that
the family
\begin{align}\label{eq:unifInt}
\left\{ e^{-\eta \int_0^\tau \lambda_u \frac{d S_u}{S_u}} : \tau ~
\text{stopping time with values in} ~ [0,T] \right\}
\end{align}
is uniformly integrable. The set of all admissible investment
strategies is denoted by $\cA$. In the following, let $t\in [0,T]$
denote a fixed time. Then the set of all admissible investment
strategies living on the time interval $[t,T]$ is defined
analogously and we denote it by $\cA_t$. Let $v_t$ denote the
investor's initial endowment at time $t$, that is, $v_t$ is an
$\cF_t$-measurable bounded random variable. The gain of the investor
at time $s \in [t,T]$, denoted by $G_s$, is subject to trading
according to investing $\lambda$ into the risky asset, and therefore
given by
\begin{align*}
\mathrm{d} G^\lambda_s &= \lambda_s \frac{dS_s}{S_s}, ~ G_t = 0.
\end{align*}
We focus on European style contingent claims, i.e. payoff profiles
resuming the form $F(R_T)$ where we assume, in accordance with
\cite{07AIR}, that $F:\R \to \R$ is measurable and bounded.  Moreover
the investor's risk assessment presumes that her utility preference
is reflected by the exponential utility function, so given a nonzero
constant risk attitude parameter $\eta$, the investor's utility
function is
\begin{align*}
U(x) &= - e^{-\eta x}, ~ x \in \R.
\end{align*}
The evolution of the investor's portfolio over the time interval
$[t,T]$ consists of her initial endowment $v_t$, her gains (or
losses) via her investment into the risky asset under an investment
strategy $\lambda$ and holding one share of the contingent claim
$F(R_T)$. Her objective is to find an investment strategy such that
her time-$t$ utility is maximized, i.e.\ her maximization problem is
given by
\begin{align}\label{eq:util}
V^F_t(v_t) &:= \sup \left\{ \IE \left[ U( v_t + G^\lambda_T + F(R_T)) \big| \cF_t \right] : \lambda \in \cA_t \right\} \nonumber\\
&= \exp \left\{ -\eta v_t  \right\} \sup \left\{ \IE \left[ U(
G^\lambda_T + F(R_T)) \big| \cF_t \right] : \lambda \in \cA_t
\right\}
\end{align}
For the sake of notational convenience, we write
\begin{align}\label{eq:lazy}
V_t^F := V_t^F (0) = \sup \left\{ \IE \left[ U( G^\lambda_T + F(R_T))
\big| \cF_t \right] : \lambda \in \cA_t \right\}.
\end{align}
Now pricing $F(R_T)$ within the utility maximization paradigm is
based on the identity
\begin{align*}
V^0_t(v_t) = V^F_t(v_t - p_t),
\end{align*}
where $V^0_t(v_t)$ denotes the time-$t$ utility with initial
endowment $v_t$ and with $F=0$ (see also Section 2 of \cite{07AIR}
and Section 3 of \cite{09Frei}). According to this identity, the
investor is indifferent about a portfolio with initial endowment
$v_t$ without receiving one quantity of the contingent claim
$F(R_T)$ and a portfolio with initial endowment $v_t - p_t$, now
receiving one quantity of the contingent claim in addition. Hence
$p_t$ is interpreted as the time-$t$ indifference price of the
contingent claim $F(R_T)$. By the equality $V^F_t(v_t) = \exp
\left\{ -\eta v_t \right\} V_t^F$, it follows that
\begin{align}\label{eq:indiffPrice}
p_t = \frac{1}{\eta} \log\frac{V^0_t}{V_t^F},
\end{align}
which means that the indifference price does not depend on the
initial endowment $v_t$. Since the time-$t$ indifference price
\eqref{eq:indiffPrice} is fully characterized by $V^0_t$ and
$V^F_t$, the focus now lies in the investigation of \eqref{eq:util}.
In fact, \cite{07AIR} and \cite{09Frei} have already pointed out
that \eqref{eq:lazy} yields a characterization by means of a
qgFBSDE. In accordance with \cite{09Frei}, let us denote by
$(\cG_u)_{0\leq u \leq T}$ the filtration generated by $W^1$,
completed by $\IP$-null sets. \cite{09Frei}'s main ideas for
rephrasing \eqref{eq:util} in terms of a qgBSDE are summarized in
the following
\begin{lemma}\label{lemma:Frei}
The qgFBSDE
\begin{alignat}{2}
& Y_s    & = & ~~ F(R_T) + \int_s^T f(u,R_u,Z_u) \udu - \int_s^T Z_u \mathrm{d}W^1_u,\quad s\in[0,T], \label{eq:utilBSDE}\\
& f(u,r,z) ~ & = & ~~ \frac{\theta^2(u,r)}{2\eta} - z\rho \theta(u,
r) - \frac{\eta}{2} \left( 1-\rho^2 \right) z^2,
\label{eq:utilDriver}
\end{alignat}
has a unique solution $(Y,Z)\in \cS^\infty \times \cH^{2}$ such that
$V^F_t = -e^{-\eta Y_t}$ holds $\IP$-almost surely.
\end{lemma}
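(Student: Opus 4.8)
The plan is to treat the two claims in turn. The wellposedness of the backward equation is a direct application of Theorem~\ref{theo:moment-estimates-special-class}, read in the filtration $\cG$ generated by $W^1$: the forward coefficients $\mu,\sigma$ in \eqref{eq:nonTrad} are Lipschitz and of linear growth and $F$ is bounded, while the driver \eqref{eq:utilDriver} is independent of $y$, has $\theta$ uniformly bounded, and, using $|z|\le\tfrac12(1+|z|^2)$ together with $f(u,r,z)-f(u,r,z')=-\rho\theta(z-z')-\tfrac{\eta}{2}(1-\rho^2)(z+z')(z-z')$, satisfies both $|f(u,r,z)|\le M(1+|z|^2)$ and $|f(u,r,z)-f(u,r,z')|\le M(1+|z|+|z'|)|z-z'|$ for a suitable $M$. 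Hence (H0) holds and the theorem provides a unique $\cG$-adapted pair $(Y,Z)\in\cS^\infty\times\cH^2$ with $Z*W^1\in BMO(\IP)$. The substantial claim is the identification $V_t^F=-e^{-\eta Y_t}$, which I would obtain by a martingale optimality (verification) argument following \cite{09Frei} and \cite{07AIR}.

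Abbreviating $\theta_s=\theta(s,R_s)$ and $\beta_s=\beta(s,R_s)$, for $\lambda\in\cA_t$ I would introduce
\[
R^\lambda_s:=-\exp\Big\{-\eta\big(G^\lambda_s+Y_s\big)\Big\},\qquad s\in[t,T],
\]
so that $R^\lambda_T=U(G^\lambda_T+F(R_T))$ while $R^\lambda_t=-e^{-\eta Y_t}$ does not depend on $\lambda$. Using $\mathrm{d}G^\lambda_s=\lambda_s\beta_s(\theta_s\,\uds+\mathrm{d}W^3_s)$, the dynamics $\mathrm{d}Y_s=-f(s,R_s,Z_s)\,\uds+Z_s\,\mathrm{d}W^1_s$, and $\mathrm{d}W^3_s=\rho\,\mathrm{d}W^1_s+\sqrt{1-\rho^2}\,\mathrm{d}W^2_s$, an application of It\^o's formula would express $R^\lambda$ as the sum of a continuous local martingale and a drift $R^\lambda_s\,D_s(\tilde\lambda_s)\,\uds$, where $\tilde\lambda_s:=\lambda_s\beta_s$ and
\[
D_s(\tilde\lambda)=\tfrac{\eta^2}{2}\Big(\tilde\lambda+\rho Z_s-\tfrac{\theta_s}{\eta}\Big)^2+\eta\Big(f(s,R_s,Z_s)-\tfrac{\theta_s^2}{2\eta}+\rho Z_s\theta_s+\tfrac{\eta}{2}(1-\rho^2)Z_s^2\Big).
\]
The crux is that the second bracket vanishes identically by the very choice of the driver \eqref{eq:utilDriver}, leaving $D_s(\tilde\lambda)\ge0$ for all $\tilde\lambda$, with equality for the feedback control $\tilde\lambda^*_s=\tfrac{\theta_s}{\eta}-\rho Z_s$, i.e. $\lambda^*_s=\beta_s^{-1}(\tfrac{\theta_s}{\eta}-\rho Z_s)$. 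Since $R^\lambda_s<0$, this makes $R^\lambda$ a local supermartingale for every $\lambda\in\cA_t$ and a local martingale for $\lambda^*$.

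It then remains to upgrade these local statements and conclude. Because $Y$ is bounded, $|R^\lambda_\tau|\le C\,e^{-\eta\int_t^\tau\lambda_u\,\mathrm{d}S_u/S_u}$, so the uniform integrability requirement \eqref{eq:unifInt} built into $\cA_t$ makes $\{R^\lambda_\tau:\tau\text{ a stopping time}\}$ uniformly integrable; a local supermartingale of class (D) is a true supermartingale, whence $\IE[\,U(G^\lambda_T+F(R_T))\mid\cF_t\,]=\IE[\,R^\lambda_T\mid\cF_t\,]\le R^\lambda_t=-e^{-\eta Y_t}$ for every $\lambda\in\cA_t$. To get equality I would verify $\lambda^*\in\cA_t$ and that $R^{\lambda^*}$ is a genuine martingale: boundedness of $\theta$ and of $\beta^{-1}$ together with $Z\in\cH^2$ give $\int_t^T(\lambda^*_u)^2\,\udu<\infty$ $\IP$-a.s., while the $BMO(\IP)$-property of $Z*W^1$ from Theorem~\ref{theo:moment-estimates-special-class}, through the reverse H\"older inequality for exponentials of $BMO$ martingales, supplies the integrability needed for \eqref{eq:unifInt} and turns $R^{\lambda^*}$ into a uniformly integrable martingale. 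Taking $\lambda=\lambda^*$ yields equality, so the supremum in \eqref{eq:lazy} equals $-e^{-\eta Y_t}$, i.e. $V_t^F=-e^{-\eta Y_t}$.

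I expect the main obstacle to be exactly this last passage: controlling all admissible $\lambda$ at once so that $R^\lambda$ is a true supermartingale, and simultaneously confirming that the candidate feedback $\lambda^*$ is admissible and renders $R^{\lambda^*}$ a true martingale. Both rest on exponential moment estimates for stochastic integrals against $Z$, for which the $BMO(\IP)$-property of $Z*W^1$ and the attendant reverse H\"older inequality are the decisive tools; the completion of the square that exposes the driver, by contrast, is routine algebra.
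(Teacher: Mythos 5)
Your proposal is correct and follows essentially the same route as the paper: existence and uniqueness from the quadratic-growth theory (the paper cites Kobylanski directly rather than Theorem~\ref{theo:moment-estimates-special-class}, but this is the same source), and the identification $V^F_t=-e^{-\eta Y_t}$ by completing the square so that the drift term $K_u=\tfrac12\big(\eta(\rho Z_u+\beta_u\lambda_u)-\theta_u\big)^2$ is nonnegative and vanishes at the feedback control $\tilde\lambda$. Your supermartingale/class~(D) phrasing is just a repackaging of the paper's stochastic-exponential factorization plus localization, and the admissibility of the optimizer via the BMO property of $Z*W^1$ and the reverse H\"older inequality is exactly the paper's argument.
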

\begin{proof}
Since $\theta(\cdot,r)$ is uniformly bounded and $\cG$-predictable,
the driver of \eqref{eq:utilBSDE} satisfies the conditions of
\cite{00Kob}; thus \eqref{eq:utilBSDE} admits a unique solution
$(Y,Z) \in \cS^\infty \times \cH^{2}$. Moreover, \cite{05ManSchweiz}
have shown that $Z * W^1$ is both a BMO($\cF$)- and
BMO($\cG$)-martingale. See also \cite{07AIR}. To prove the identity
$V^F_t = -e^{-\eta Y_t}$, we notice that
\begin{align*}
e^{ -\eta\left(G^\lambda_T + Y_T\right) } &= e^{ -\eta G^\lambda_t}
e^{ -\eta Y_t } e^{-\eta\left(Y_T - Y_t\right)}  e^{
-\eta\left(G^\lambda_T - G^\lambda_t\right)}\\
&= e^{ -\eta Y_t } e^{-\eta\left(Y_T - Y_t\right)}  e^{
-\eta\left(G^\lambda_T - G^\lambda_t\right)},
\end{align*}
because $G^\lambda_t=0$. We then have
\begin{align*}
&\exp\left\{ -\eta\left(Y_T - Y_t\right) \right\} \exp\{ -\eta(G^\lambda_T - G^\lambda_t) \}\\
&\qquad= \exp \left\{ -\eta \left( \int_t^T Z_u \mathrm{d}W^1_u + \int_t^T \lambda_u \beta(u,R_u)\mathrm{d}W^3_u 
+ \int_t^T \left[ \lambda_u \alpha(u,R_u) - f(u,R_u,Z_u) \right]
\udu \right) \right\}.
\end{align*}
Denoting $\cE_t^s (M) = \cE(M)_s / \cE(M)_t$ for $ t \leq s \leq T$
where $\cE(M)_s$ is the stochastic exponential of a given
semi-martingale $M$, we introduce
\begin{align*}
K_u := \frac{1}{2} \Big( \eta \left( \rho Z_u + \beta(u,R_u)
\lambda_u  \right) - \theta_u \Big)^2, ~~ t \leq u \leq T.
\end{align*}
Then a simple calculation yields
\begin{align*}
&\exp\Big\{-\eta\big(Y_T - Y_t\big)\Big\} \exp\Big\{ -\eta\big(G^\lambda_T - G^\lambda_t\big) \Big\}\\
&\qquad\qquad = \cE_t^T \left( \int -\eta Z \mathrm{d}W^1 - \int
\eta \lambda \beta(\cdot,R) \mathrm{d}W^3 \right)
\exp\left\{ \int_t^T K_u \udu \right\}.
\end{align*}
Since $\lambda \beta(\cdot,R)*W^3$ is a BMO-martingale, we can
condition with respect to the $\sigma$-algebra $\cF_t$ and get
\begin{align}\label{eq:ineq1}
\IE \left[ e^{-\eta \left( G^\lambda_T + F(R_T) \right) } ~ \big| ~ \cF_t \right] &= e^{-\eta Y_t} e^{\int_t^T K_u \udu} 
\geq e^{-\eta Y_t}.
\end{align}
By \eqref{eq:unifInt} and a localization argument, this inequality
holds for every $\lambda \in \cA_t$, and therefore we have $V^F_t
\leq - e^{-\eta Y_t}$. To prove equality, note that the inequality
\eqref{eq:ineq1} becomes an equality for $\tilde{\lambda}_u = -
\frac{\rho}{\beta(u,R_u)}Z_u + \frac{\theta(u,R_u)}{\eta
\beta(u,R_u)}$; this in conjunction with the observation that
\begin{align*}
\exp\left\{-\eta \tilde{\lambda}_u \frac{dS_u}{S_u}  \right\} &=
\exp\left\{-\eta G^{\tilde{\lambda}}_T  \right\}
= \exp\left\{-\eta  \left( G^{\tilde{\lambda}}_T - G^{\tilde{\lambda}}_t \right) \right\}\\
&= \cE_t^T \left( \int -\eta Z \mathrm{d}W^1 - \int \eta
\tilde{\lambda} \beta(\cdot,R) \mathrm{d}W^3 \right) \times
\exp\left\{-\eta\left(Y_T - Y_t\right)\right\}
\end{align*}
is the product of a bounded process and true $\cF$-martingale yields
that condition \eqref{eq:unifInt} is satisfied. Hence
$\tilde{\lambda} \in \cA_t$ and we have shown $V^F_t = -e^{-\eta
Y_t}$.
\end{proof}
The proof of the previous Lemma \ref{lemma:Frei} yields the
following
\begin{corollary}\label{coro:optStrategy}
The investment strategy
\begin{align}\label{eq:optStrategy}
\tilde{\lambda}_s := - \frac{\rho}{\beta(s,R_s)}Z_s +
\frac{\theta(s,R_s)}{\eta \beta(s,R_s)}, ~~ t \leq s \leq T,
\end{align}
where $Z$ is the control component of the solution to
\eqref{eq:utilBSDE}, belongs to $\cA_t$ and satisfies
\begin{align*}
\IE \left[ U( v_t + G^{\tilde{\lambda}_T} + F(R_T) \big| \cF_t
\right] &= \sup \left\{ \IE \left[ U( v_t + G^\lambda_T + F(R_T)
\big| \cF_t \right] : \lambda \in \cA_t \right\}= V^F_t(v_t).
\end{align*}
\end{corollary}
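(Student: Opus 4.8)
The plan is to read the corollary off directly from the proof of Lemma~\ref{lemma:Frei}, in which the candidate strategy $\tilde\lambda$ already appeared; the two substantive ingredients are therefore essentially in place, and only an endowment adjustment remains. First I would recall admissibility: in that proof the process $\exp\{-\eta\tilde\lambda_u\,\frac{dS_u}{S_u}\}$ was exhibited as the product of a bounded process and a true $\cF$-martingale, which secures the uniform integrability demanded in \eqref{eq:unifInt} and hence gives $\tilde\lambda\in\cA_t$.

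Next, the core step is to verify that the generic upper bound \eqref{eq:ineq1} is sharp at $\tilde\lambda$. This reduces to the elementary completion-of-square computation that the integrand $K_u$ vanishes along $\tilde\lambda$: inserting \eqref{eq:optStrategy} gives $\rho Z_u+\beta(u,R_u)\tilde\lambda_u=\theta_u/\eta$, whence $\eta(\rho Z_u+\beta(u,R_u)\tilde\lambda_u)-\theta_u=0$ and $K_u\equiv 0$ on $[t,T]$. Consequently $\int_t^T K_u\,\udu=0$, and \eqref{eq:ineq1} collapses to $\IE[e^{-\eta(G^{\tilde\lambda}_T+F(R_T))}\mid\cF_t]=e^{-\eta Y_t}$; equivalently $\IE[U(G^{\tilde\lambda}_T+F(R_T))\mid\cF_t]=-e^{-\eta Y_t}=V^F_t$, so $\tilde\lambda$ attains the bound $V^F_t\le-e^{-\eta Y_t}$ established in the Lemma.

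Finally I would account for the initial endowment. Since $v_t$ is $\cF_t$-measurable and $U(v_t+x)=e^{-\eta v_t}U(x)$, factoring $e^{-\eta v_t}$ out of the conditional expectation yields $\IE[U(v_t+G^{\tilde\lambda}_T+F(R_T))\mid\cF_t]=e^{-\eta v_t}(-e^{-\eta Y_t})$, which by the factorization $V^F_t(v_t)=e^{-\eta v_t}V^F_t$ recorded in \eqref{eq:util} equals $V^F_t(v_t)$, the desired supremum. I expect no genuine obstacle here: both technically demanding points, namely the admissibility of $\tilde\lambda$ and the sharpness of \eqref{eq:ineq1}, were already discharged inside the proof of Lemma~\ref{lemma:Frei}, so the corollary is in effect a harvesting of that proof. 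The only thing worth a second look is that the identity $K_u\equiv 0$ holds $\IP\otimes\udt$-almost everywhere under merely the measurability and boundedness of $\theta$ and $\beta$ together with $\beta^2\ge\varepsilon>0$, rather than any continuity, which is exactly what guarantees that $\tilde\lambda$ in \eqref{eq:optStrategy} is well defined.
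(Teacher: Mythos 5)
Your proposal is correct and follows exactly the route the paper intends: the paper gives no separate proof, stating only that the proof of Lemma~\ref{lemma:Frei} yields the corollary, and the three points you harvest (admissibility of $\tilde\lambda$ via the bounded-process-times-true-martingale argument, sharpness of \eqref{eq:ineq1} because $K_u\equiv 0$ along $\tilde\lambda$, and the factorization $V^F_t(v_t)=e^{-\eta v_t}V^F_t$ to reinstate the endowment) are precisely the ingredients already discharged there.
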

One application is given in the following example.
\begin{example}\label{example:PutOption}[Put option on kerosene, compare with Example 1.2 from \cite{07AIR}]
Facing recent considerable declines in world oil prices, companies
producing kerosene wish to partially cover their risk of such a
depreciation. European put options are an established financial
instrument to comply with this demand of risk covering. Since
kerosene is not traded in a liquid market,
derivative contracts on this underlying must be arranged on an
over-the-counter basis. Knowing that the price of heating oil is
highly correlated with the price of kerosene, the pricing and hedging
of a European put option on kerosene can be done by a dynamic
investment in (the liquid market of) heating oil. A numerical
treatment of this pricing problem will be displayed in Section
\ref{numerics-of-pricing-problem}.
\end{example}

\section{Smoothness and path regularity results}\label{section-smooth-results}

The principal aim of this paper is to survey some recent results on
the numerical approximation of prices and hedging strategies of
financial derivatives such as the liability $F(R_T)$ in the setting
of the previous section. As we saw, this leads us directly to
qgFBSDE. In the subsequent sections we shall discuss an approach
based on a truncation of the driver's quadratic part in the control
variable. It will be crucial to give an estimate for the error
committed by truncating. Our error estimate will be based on
smoothness results for the control component $Z^x$ of solutions of
the BSDE part of our system. Smoothness is understood both in the
sense of regular sensitivity to initial states $x$ of the forward
component, as well as in the sense of the stochastic calculus of
variations. Since the control component of the solution of a BSDE is
related to the Malliavin trace of the other component, we will be
led to look at variational derivatives of the first order.

Our first result concerns the smoothness of the map
$[0,T]\times\IR^m\ni (t,x)\mapsto Z_t^x$, especially its
differentiability in $x$. The second result refers to the
variational differentiability of $(Y^x, Z^x)$ in the sense of
Malliavin's calculus. We shall work under the following hypothesis,
where we denote the gradient by the common symbol $\nabla$, and by
$\nabla_u$ if we wish to emphasize the variable $u$ with respect to
which the derivative is taken.

\bit
\item[{\bf (H1)}] $\quad$ Assume that {(H0)} holds. For any $0\le t\le T$ the functions $b(t,\cdot),\sigma_i(t,
\cdot), 1\le i\le d,$ are continuously differentiable with bounded
derivatives in the spatial variable. There exists a positive
constant $c$ such that
\begin{align}\label{uniform-ellipticity}
y^T \sigma(t,x)\sigma^T(t,x)y \geq c|y|^2,\quad x,y\in\IR^m,\ t\in[0,T].
\end{align}

$f$ is continuously partially differentiable in $(x,y,z)$ and there exists
$M\in\IR_+$ such that for
$(t,x,y,z)\in[0,T]\times\IR^m\times\IR\times\IR^d$
\begin{align*}
|\nabla_x f(t,x,y,z)|&\leq M(1+|y|+|z|^2),\\
|\nabla_y f(t,x,y,z)|&\leq M,\\
|\nabla_z f(t,x,y,z)|&\leq M(1+|z|).
\end{align*}
$g:\R^m\to\R$ is a continuously differentiable function satisfying $|\nabla g|\leq M$.
\eit

\subsection*{Smoothness results}
The following differentiability results are extensions of Theorems
proved in \cite{AIdR07} and \cite{07BriCon}. For further details,
comments and complete proofs we refer to the mentioned works or to
\cite{phd-dosReis}.

\begin{theorem}[Classical differentiability]\label{theo:1st-order-diff-qgbsde}
Suppose that {(H1)} holds. Then for all $p\geq 2$ the solution
process $\Theta^x=(X^x,Y^x,Z^x)$ of the qgFBSDE (\ref{the-sde}),
(\ref{the-fbsde}) with initial vector $x\in\R^m$ for the forward
component belongs to $\cS^p\times \cS^p\times \cH^p$. The
application $\IR^m\ni x\mapsto (X^x,Y^x,Z^x)\in \cS^p(\IR^m)\times
\cS^p(\IR)\times \cH^p(\IR^d)$ is differentiable. The derivatives of
$x\mapsto X^x$ satisfy (\ref{diff-sde}) while the derivatives of the
map $x\mapsto (Y^x,Z^x)$ satisfy the linear BSDE
\begin{align*}
\nabla Y_t^{x}&= \nabla g(X_T^{x})\nabla X^x_T-\int_t^T \nabla Z_s^{x}\ud W_s+\int_t^T
\langle \nabla f(s,\Theta^x_s) , \nabla \Theta^x_s \rangle \ud s, \quad t\in[0,T].
\end{align*}
\end{theorem}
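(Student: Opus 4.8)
The plan is to treat the forward and backward components in turn, using the $BMO$ property of $Z^x*W$ from Theorem~\ref{theo:moment-estimates-special-class} as the workhorse that compensates for the quadratic growth of $f$ in $z$. For the forward equation, differentiability of $x\mapsto X^x$ together with the variational SDE for $\nabla X^x$ is entirely classical under {(H1)}: the coefficients $b,\sigma$ are Lipschitz with bounded spatial derivatives, so the standard results summarised in Appendix~3 give $\nabla X^x\in\cS^p$ for every $p\geq2$ and the map $x\mapsto X^x\in\cS^p(\IR^m)$ is differentiable. That $Y^x\in\cS^p$ (indeed $\cS^\infty$) and $Z^x\in\cH^p$ for all $p$ follows from Theorem~\ref{theo:moment-estimates-special-class}: since $Z^x*W\in BMO(\IP)$, the energy inequalities and the inverse H\"older inequality for the exponential of a $BMO$ martingale upgrade the $\cH^2$-bound on $Z^x$ to an $\cH^p$-bound for every $p$.

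First I would identify the candidate derivative. Differentiating \eqref{the-fbsde} formally in $x$ produces, for each coordinate direction, the linear BSDE
\begin{align*}
\nabla Y_t^{x}&= \nabla g(X_T^{x})\nabla X^x_T+\int_t^T \langle \nabla f(s,\Theta^x_s) , \nabla \Theta^x_s \rangle \ud s-\int_t^T \nabla Z_s^{x}\ud W_s,
\end{align*}
whose terminal datum lies in $L^p$ for all $p$ (as $|\nabla g|\le M$ and $\nabla X_T^x\in L^p$) and whose generator is \emph{linear} in $(\nabla Y,\nabla Z)$ with coefficients $\nabla_y f$ (bounded by $M$), $\nabla_z f$, and inhomogeneity $\nabla_x f(s,\Theta^x_s)\nabla X^x_s$. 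The crucial structural observation from {(H1)} is that $|\nabla_z f(s,\Theta^x_s)|\leq M(1+|Z^x_s|)$, so that $\nabla_z f(\cdot,\Theta^x_\cdot)$ integrated against $W$ is itself a $BMO(\IP)$ martingale. Well-posedness then follows by a Girsanov change of measure: setting $\ud\IQ/\ud\IP=\cE\big(\nabla_z f(\cdot,\Theta^x_\cdot)*W\big)_T$, the inverse H\"older inequality guarantees that $\IQ$ is an equivalent probability measure and that $L^p(\IP)$ and $L^p(\IQ)$ norms are mutually controllable; under $\IQ$ the $\nabla Z$-term is absorbed into the stochastic integral, and the resulting BSDE with bounded driver $\nabla_y f$ is solved by the usual contraction argument, giving a unique $(\nabla Y^x,\nabla Z^x)\in\cS^p\times\cH^p$.

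Next I would justify that this candidate is genuinely the derivative. Writing $\delta^h X=(X^{x+he_i}-X^x)/h$ and likewise $\delta^h Y,\delta^h Z$ for the difference quotients, one checks that $(\delta^h X,\delta^h Y,\delta^h Z)$ solves a BSDE of the same linear type, but with coefficients given by the integrated mean-value increments of $\nabla f$ along the segment joining $\Theta^x$ and $\Theta^{x+he_i}$. Subtracting the equation for $(\nabla Y^x,\nabla Z^x)$ and applying the a priori stability estimate for linear BSDEs with $BMO$ coefficients (again via Girsanov and inverse H\"older), the error is bounded by the $\cH^p$-distance between the mean-value coefficients and $\nabla f(\cdot,\Theta^x_\cdot)$. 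Continuity of $\nabla f$, together with the stability of the forward flow and of $(Y,Z)$ in $x$ (itself established first by a preliminary Lipschitz-in-$x$ estimate using the same $BMO$ machinery), drives this distance to $0$ as $h\to0$, yielding $\delta^h(Y,Z)\to(\nabla Y^x,\nabla Z^x)$ in $\cS^p\times\cH^p$.

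The main obstacle is the quadratic term. When comparing solutions at $x$ and $x+he_i$, the mean-value coefficient multiplying $\delta^h Z$ has the form $\int_0^1\nabla_z f(s,\Theta^{x,\tau}_s)\,\ud\tau$, bounded only by $M(1+|Z^x_s|+|Z^{x+he_i}_s|)$; this is precisely the quadratic growth reappearing in the linearisation, and since it is unbounded, naive Gronwall or contraction arguments fail. The resolution is to show that the family $\{Z^{x+he_i}*W\}_h$ is bounded in $BMO(\IP)$ \emph{uniformly in} $h$, so that the associated change-of-measure densities satisfy an inverse H\"older inequality with a common exponent and constant. This uniform $BMO$ control, inherited from the a priori bounds of Theorem~\ref{theo:moment-estimates-special-class} (which depend on $T$ and $M$ only), is what renders all the $\cH^p$ estimates uniform in $h$ and permits passage to the limit; it is the technical heart of the argument and the place where the quadratic-growth setting genuinely differs from the Lipschitz one.
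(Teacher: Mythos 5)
Your outline is correct and coincides with the strategy of the proofs this survey defers to (\cite{AIdR07}, \cite{07BriCon}, \cite{phd-dosReis}): linearise the BSDE, absorb the $\nabla_z f$-coefficient --- which is only of linear growth in $Z$ --- by a Girsanov change of measure built on the $BMO$ property of $Z*W$, use the reverse H\"older inequality to pass moments between the two measures, and make all estimates uniform in the perturbation parameter via the fact that the $BMO$ bound depends only on $T$ and $M$. You correctly identify the uniform-in-$h$ $BMO$ control of the family $\{Z^{x+he_i}*W\}_h$ as the technical heart of the argument, which is exactly where the quadratic-growth case departs from the Lipschitz one.
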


\begin{theorem}[Malliavin differentiability]\label{theo:bsde-1d-mall-diff}
Suppose that {(H1)} holds. Then the solution process $(X,Y,Z)$ of
FBSDE (\ref{the-sde}), (\ref{the-fbsde}) has the following
properties. For $x\in\IR^m$,
\begin{itemize}
\item
$X^x$ satisfies (\ref{mall-diff-of-X}) and for any $0\leq t\leq T$,
$x\in\R^m$ we have $(Y^x,Z^x)\in
\mathbb{L}_{1,2}\times\big(\mathbb{L}_{1,2}\big)^d$. $X^x$ fulfills
the statement of Theorem \ref{X-malliavin-diff}, and a version of
$(D_u Y^x_t,D_u Z^x_t)_{0\leq u,t\leq T}$ satisfies
\begin{align*}
D_u Y^x_t &= 0, \qquad D_u Z^x_t = 0,\qquad t<u\le T,\nonumber\\
D_u Y^x_t &= \nabla g(X^x_T) D_u X^x_T + \int_t^T \langle \nabla
f(s,\Theta^x_s), D_u \Theta^x_s \rangle \ud s - \int_t^T  D_u Z^x_s
\ud W_s, \quad t\in [u,T].
\end{align*}
Moreover, $(D_t Y^x_t)_{0\le t \leq T}$ defined by the above
equation is a version of $(Z^x_t)_{0\le t\leq T}$.
\item The following representation holds for any $0\leq  u\leq t\leq T$ and $x\in\R^m$
\begin{align*}
D_u Y^x_t &= \nabla_x Y^x_t (\nabla_x X^x_u)^{-1}\sigma(u,X^x_u),\quad a.s.,\\
Z_t&=\nabla_x Y^x_t (\nabla_x X^x_t)^{-1}\sigma(s,X^x_t),\quad a.s..
\end{align*}
\end{itemize}
\end{theorem}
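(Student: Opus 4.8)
The plan is to treat the forward and backward components separately, reducing the backward part to the classical Lipschitz theory by a truncation of the driver and then passing to the limit with the help of the BMO estimates furnished by Theorem \ref{theo:moment-estimates-special-class}. For the forward diffusion the Malliavin differentiability and equation (\ref{mall-diff-of-X}) are classical under (H0)--(H1), so I would simply invoke Theorem \ref{X-malliavin-diff}, which also supplies the linear SDE (\ref{diff-sde}) solved by $\nabla_x X^x$, the linear SDE solved by $D_u X^x$, and the flow identity $D_u X^x_t=\nabla_x X^x_t(\nabla_x X^x_u)^{-1}\sigma(u,X^x_u)$ for $u\le t$. This last identity is the template for the corresponding representation on the backward side.

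First I would write down the candidate linear BSDE. Differentiating (\ref{the-fbsde}) in the Malliavin sense produces a linear BSDE for $(D_uY^x,D_uZ^x)$ with terminal value $\nabla g(X^x_T)D_uX^x_T$ and generator $\langle\nabla f(s,\Theta^x_s),D_u\Theta^x_s\rangle$. The delicate coefficient is $\nabla_zf$, which under (H1) grows like $M(1+|Z^x_s|)$ and is therefore unbounded. The way around this is that, by Theorem \ref{theo:moment-estimates-special-class}, $Z^x*W\in BMO(\IP)$; since $|\nabla_zf(\cdot,\Theta^x)|\le M(1+|Z^x|)$, the martingale $\nabla_zf(\cdot,\Theta^x)*W$ is itself $BMO$, and the reverse H\"older inequality for the stochastic exponential of a $BMO$ martingale then yields well-posedness and $L^p$ a priori estimates for this linear equation despite the unbounded generator.

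To make the differentiation rigorous I would truncate, replacing $f$ by a sequence $f_n$ that agrees with $f$ on $\{|z|\le n\}$ and is globally Lipschitz, so that the classical results of Pardoux--Peng and El Karoui--Peng--Quenez give solutions $(Y^n,Z^n)\in\mathbb{L}_{1,2}\times(\mathbb{L}_{1,2})^d$ whose Malliavin derivatives solve the associated linear BSDE. The heart of the argument is to show that the bounds for $(D_uY^n,D_uZ^n)$ are uniform in $n$: here the $BMO$ estimates for $Z^n*W$, uniform in $n$ by Theorem \ref{theo:moment-estimates-special-class} applied to the truncated systems, combine with a Girsanov change of measure and the reverse H\"older inequality to absorb the quadratic term and produce estimates independent of $n$ in $\cS^p\times\cH^p$ for suitable $p$. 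Given these, the convergence $(Y^n,Z^n)\to(Y^x,Z^x)$ together with closability of the Malliavin operator $D$ gives $(Y^x,Z^x)\in\mathbb{L}_{1,2}\times(\mathbb{L}_{1,2})^d$, passes the limit inside the linear BSDE, and yields both the stated equation for $(D_uY^x,D_uZ^x)$ and the vanishing $D_uY^x_t=D_uZ^x_t=0$ for $t<u$ (non-anticipativity of the adapted solution). The trace identity $Z^x_t=D_tY^x_t$ follows either by passing $Z^n_t=D_tY^n_t$ to the limit or from the general BSDE trace representation now available.

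It remains to derive the two representation formulas, and here I would argue by uniqueness of linear BSDE rather than by further estimates. Fix $u$ and set, for $t\ge u$, $V_t:=\nabla_x Y^x_t\,(\nabla_x X^x_u)^{-1}\sigma(u,X^x_u)$. Since the factor $(\nabla_x X^x_u)^{-1}\sigma(u,X^x_u)$ is $\cF_u$-measurable and constant in $t$, I would insert the linear BSDE for $\nabla_x Y^x$ from Theorem \ref{theo:1st-order-diff-qgbsde} and use the forward flow identity for $D_uX^x$ to verify that $V$ solves exactly the linear BSDE characterizing $D_uY^x$, with the same generator and terminal value $\nabla g(X^x_T)D_uX^x_T$. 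Uniqueness for that equation, guaranteed by the $BMO$ control above, forces $D_uY^x_t=V_t$, which is the first representation; taking $u=t$ and using $Z^x_t=D_tY^x_t$ gives $Z^x_t=\nabla_x Y^x_t(\nabla_x X^x_t)^{-1}\sigma(t,X^x_t)$. I expect the genuinely hard step to be the uniform-in-$n$ a priori estimate for the Malliavin derivatives under quadratic growth; everything after that is linear-BSDE bookkeeping and an appeal to closability of $D$.
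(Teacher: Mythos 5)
Your proposal is correct and follows essentially the same route as the proof the paper relies on (stated without proof here and deferred to \cite{AIdR07} and \cite{phd-dosReis}): truncation of the driver to reduce to the Lipschitz case, uniform a priori estimates for the Malliavin derivatives obtained from the uniform $BMO$ bounds on $Z^n*W$ via Girsanov and the reverse H\"older inequality, closability of $D$ to pass to the limit, and uniqueness of the resulting linear BSDE to identify $D_uY^x_t$ with $\nabla_x Y^x_t(\nabla_x X^x_u)^{-1}\sigma(u,X^x_u)$ and hence $Z^x_t$ with the trace $D_tY^x_t$. No genuine gaps beyond the level of detail expected in a sketch.
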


\subsection*{Regularity and bounds for the solution process}\label{section-path-reg-theo}

A careful analysis of $DY$ in both its variables under the smoothness assumptions on the coefficients of our system
formulated earlier reveals the following continuity properties for the control process $Z$.
\begin{theorem}[Time continuity and bounds]\label{DY-continuity-Z-sinfty-bounds}
Assume {(H1)}. Then the control process $Z$ of the qgFBSDE (\ref{the-sde})-(\ref{the-fbsde}) has a continuous version on $[0,T]$. Furthermore for all $p\geq 2$ it satisfies
\begin{align}
\label{Z-moment-time} \lVert Z \lVert_{\cS^{p}}\ &<\infty.
\end{align}
\end{theorem}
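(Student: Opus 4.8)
The plan is to build everything on the representation of the control process furnished by Theorem~\ref{theo:bsde-1d-mall-diff}, namely
\begin{align*}
Z_t &= \nabla_x Y^x_t\,(\nabla_x X^x_t)^{-1}\,\sigma(t,X^x_t),\qquad t\in[0,T],\ \text{a.s.},
\end{align*}
which exhibits $Z$ as a pointwise product of three processes. The idea is that each factor is separately path-continuous and possesses moments of every order in the $\cS^p$-sense; continuity of $Z$ will then follow from continuity of the product, and the $\cS^p$-bound from an application of H\"older's inequality to the three suprema.

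First I would treat continuity. The map $t\mapsto\sigma(t,X^x_t)$ is continuous since $\sigma$ is continuous and $X^x$ has continuous trajectories. The first variation process $\nabla_x X^x$ solves a linear matrix SDE (Appendix~3) with continuous paths and is invertible; its inverse $(\nabla_x X^x)^{-1}$ satisfies the associated linear SDE obtained by It\^o's formula and is therefore also path-continuous. Finally $\nabla_x Y^x$ is, by Theorem~\ref{theo:1st-order-diff-qgbsde}, the first component of the solution of a linear BSDE, hence a continuous semimartingale. Setting $\tilde Z_t := \nabla_x Y^x_t\,(\nabla_x X^x_t)^{-1}\,\sigma(t,X^x_t)$ yields a process with continuous trajectories that agrees with $Z_t$ almost surely for each fixed $t$; consequently $\tilde Z$ is the desired continuous version of $Z$.

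For the bound \eqref{Z-moment-time} I would estimate
\begin{align*}
\sup_{t\in[0,T]}|Z_t| &\leq \Big(\sup_{t}|\nabla_x Y^x_t|\Big)\Big(\sup_{t}|(\nabla_x X^x_t)^{-1}|\Big)\Big(\sup_{t}|\sigma(t,X^x_t)|\Big),
\end{align*}
and apply H\"older's inequality with three conjugate exponents, so that $\lVert Z\rVert_{\cS^p}$ is controlled by the product of the $\cS^{3p}$-norms of the three factors. Each of these is finite for every $p$: by the linear growth of $\sigma$ from {(H0)} one has $\sup_t|\sigma(t,X^x_t)|\leq K(1+\sup_t|X^x_t|)$ with $X^x\in\cS^q$ for all $q$; the moments of $\sup_t|(\nabla_x X^x_t)^{-1}|$ are finite because, under {(H1)}, the linear SDE governing the inverse first variation has bounded coefficients (Appendix~3); and $\nabla_x Y^x\in\cS^q$ for all $q$ is exactly the content of Theorem~\ref{theo:1st-order-diff-qgbsde}. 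Combining these estimates gives $\lVert Z\rVert_{\cS^p}<\infty$, which upgrades the $\cH^p$-membership of Theorem~\ref{theo:moment-estimates-special-class} to the stronger supremum bound.

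The substantive input is the $\cS^p$-regularity of $\nabla_x Y^x$ supplied by Theorem~\ref{theo:1st-order-diff-qgbsde}, whose proof rests on the $BMO$ property of $Z*W$ together with the inverse H\"older inequality for stochastic exponentials of $BMO$ martingales; granting that, the remaining steps are essentially bookkeeping. I expect the main obstacle to be the careful verification of the uniform-in-$p$ moment bounds for the inverse first variation $(\nabla_x X^x)^{-1}$ and, more delicately, the upgrade from the pointwise almost-sure identity for $Z_t$ to genuine indistinguishability from the continuous process $\tilde Z$, which is what legitimately produces a continuous \emph{version} rather than merely a continuous candidate.
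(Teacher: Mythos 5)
Your proposal is correct and follows essentially the route the paper indicates: it exploits the representation $Z_t=\nabla_x Y^x_t(\nabla_x X^x_t)^{-1}\sigma(t,X^x_t)$ from Theorem~\ref{theo:bsde-1d-mall-diff} (equivalently, the analysis of $D_uY_t$ in both variables), obtaining continuity from the continuity of the three factors and the $\cS^p$-bound from H\"older's inequality together with $\nabla Y^x\in\cS^q$, $(\nabla X^x)^{-1}\in\cS^q$ and the linear growth of $\sigma$. The only remark worth making is that your worry about upgrading to indistinguishability is unnecessary: the pointwise almost-sure identity at each fixed $t$ is exactly what is needed for $\tilde Z$ to be a continuous \emph{version} of $Z$.
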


\begin{theorem}[Regularity]\label{theo:path-reg-general-case}
Under {(H1)} the solution process $(X,Y,Z)$ of the qgFBSDE
(\ref{the-sde}), (\ref{the-fbsde}) satisfies for all $p \geq 2$ \bit
\item[i)]
there exists a constant $C_p>0$ such that for $0\leq
s\leq t\leq T$
 we have
\[\IE[\sup_{s\leq u\leq t} |Y_u-Y_s|^p \,]\leq C_p |t-s|^{\frac{p}{2}};\]

\item[ii)]
there exists a constant $C_p>0$ such that for any
partition $\pi = \{t_0,\cdots t_N\}$ with $0=t_0<\cdots<t_N=T$ of $[0,T]$ with mesh size $|\pi|$
\[
\sum_{i=0}^{N-1} \IE\Big[\Big(\int_{t_i}^{t_{i+1}}|Z_t-Z_{t_i}|^2\udt\Big)^{\frac{p}2}\Big]\leq C_p |\pi|^{\frac{p}2}.
\]
\eit
\end{theorem}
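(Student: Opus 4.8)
The plan is to treat the two assertions separately: part i) follows fairly directly from the backward equation together with the bounds of Theorem~\ref{DY-continuity-Z-sinfty-bounds}, whereas part ii) I would reduce to a pointwise Hölder estimate for $Z$ and then prove the latter via the Malliavin representation. For i), I start from the identity
\[
Y_u-Y_s=\int_s^u Z_r\,\ud W_r-\int_s^u f(r,X_r,Y_r,Z_r)\,\ud r,\qquad s\le u\le t,
\]
take the supremum over $u\in[s,t]$ and then $p$-th moments, splitting the right-hand side into a stochastic integral and a drift contribution. Burkholder--Davis--Gundy bounds the martingale part by $C_p\,\IE[(\int_s^t|Z_r|^2\,\ud r)^{p/2}]\le C_p|t-s|^{p/2}\|Z\|_{\cS^p}^p$, which is finite by Theorem~\ref{DY-continuity-Z-sinfty-bounds}. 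For the drift, the growth bound $|f|\le M(1+|Y|+|Z|^2)$ combined with $Y\in\cS^\infty$ and $Z\in\cS^{2p}$ (again Theorem~\ref{DY-continuity-Z-sinfty-bounds}) yields, after a Hölder inequality in time, a bound of order $|t-s|^{p}\le T^{p/2}|t-s|^{p/2}$. Adding the two contributions gives i).

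For ii), the first step is to reduce the claim to the pointwise estimate
\begin{equation*}
\IE[\,|Z_t-Z_s|^p\,]\le C_p\,|t-s|^{p/2},\qquad 0\le s\le t\le T. \tag{$\star$}
\end{equation*}
Indeed, writing $h_i=t_{i+1}-t_i$ and applying Jensen's inequality with the convex map $x\mapsto x^{p/2}$ ($p\ge2$) against the normalised Lebesgue measure $\ud r/h_i$ on $[t_i,t_{i+1}]$ gives $(\int_{t_i}^{t_{i+1}}|Z_t-Z_{t_i}|^2\,\ud t)^{p/2}\le h_i^{p/2-1}\int_{t_i}^{t_{i+1}}|Z_t-Z_{t_i}|^p\,\ud t$. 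Taking expectations and inserting $(\star)$ bounds the $i$-th summand by $C_p\,h_i^{p}$; summing and using $p-1\ge p/2$ for $p\ge2$ produces $\sum_i C_p h_i^{p}\le C_p|\pi|^{p-1}T\le C_pT^{p/2}|\pi|^{p/2}$, which is the asserted bound.

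It remains to establish $(\star)$, which is the crux. Here I would exploit the representation $Z_t=\nabla_x Y_t(\nabla_x X_t)^{-1}\sigma(t,X_t)$ of Theorem~\ref{theo:bsde-1d-mall-diff} and write $Z_t-Z_s$ as a telescoping sum of three increments, varying the factors $\nabla_x Y$, $(\nabla_x X)^{-1}$ and $\sigma(\cdot,X)$ one at a time. Each increment is separated from the remaining factors by Hölder's inequality with suitably high conjugate exponents. The increments of $(\nabla_x X)^{-1}$ and of $\sigma(\cdot,X)$ carry the desired factor $|t-s|^{1/2}$ in every $L^q$ by standard forward-SDE regularity together with the Lipschitz property of $\sigma$ and the uniform ellipticity of {(H1)}; the increment of $\nabla_x Y$ is handled exactly as in part i), since by Theorem~\ref{theo:1st-order-diff-qgbsde} the process $\nabla Y$ solves the linear BSDE, so that $\nabla Y_t-\nabla Y_s=\int_s^t\nabla Z_r\,\ud W_r-\int_s^t\langle\nabla f(r,\Theta_r),\nabla\Theta_r\rangle\,\ud r$ and BDG again produces the $|t-s|^{1/2}$ rate.

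The main obstacle is the control of the moments entering these Hölder splittings in the presence of quadratic growth. Concretely one needs sufficiently high $L^q$-moments of $\nabla_x Y$, of $\nabla_x X$ and its inverse, and---for the time-increment of $\nabla_x Y$---an $\cS^q$-bound on the variational control process $\nabla Z$. Since $\nabla_z f$ only satisfies $|\nabla_z f|\le M(1+|Z|)$, the relevant linear BSDE has coefficients that are merely in $BMO$, and all these moment bounds must be extracted through the $BMO$ property of $Z*W$ (Theorem~\ref{theo:moment-estimates-special-class}) together with the inverse Hölder inequality for stochastic exponentials of $BMO$ martingales. This is precisely the toolbox assembled in Section~\ref{section-smooth-results}, and checking that it yields the constants $C_p$ uniformly in $(s,t)$ is the technical heart of the argument.
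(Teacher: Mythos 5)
Part i) of your argument is fine, and the Jensen reduction of part ii) to the pointwise estimate $(\star)$ is logically valid; the genuine gap is that $(\star)$ itself is out of reach under (H1). In your telescoping of $Z_t-Z_s=\nabla Y_t(\nabla X_t)^{-1}\sigma(t,X_t)-\nabla Y_s(\nabla X_s)^{-1}\sigma(s,X_s)$, the increments of $(\nabla X)^{-1}$ and of $\sigma(\cdot,X)$ do carry the rate $|t-s|^{1/2}$ in every $L^q$, and even the drift part of the $\nabla Y$-increment can be given the rate by Cauchy--Schwarz in time. But the martingale part cannot: BDG gives $\IE[|\int_s^t\nabla Z_r\,\ud W_r|^p]\le C_p\,\IE[(\int_s^t|\nabla Z_r|^2\,\ud r)^{p/2}]$, and under (H1) the only available information on $\nabla Z$ is the $\cH^p$-bound of Theorem \ref{theo:1st-order-diff-qgbsde}, which makes this quantity bounded uniformly in $(s,t)$ but gives no decay of order $|t-s|^{p/2}$. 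The $\cS^q$-bound on $\nabla Z$ that you invoke to close this step is not provided by anything in the paper and cannot be extracted from the BMO toolbox either ($\nabla Y$ is not bounded, so $\nabla Z*W$ is not known to be BMO); producing it would require differentiating the linear BSDE once more, i.e. second-order derivatives of $f$ and $g$, which (H1) does not assume. The same obstruction reappears if one works with $D_sY_t-D_sY_s$ instead of $\nabla Y_t-\nabla Y_s$.

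This is precisely why the theorem is stated in the integrated form of ii) rather than as a pointwise H\"older estimate, and why the reduction to $(\star)$, although it looks like a harmless Jensen step, discards the structure that makes the proof work. The argument the paper refers to (in \cite{phd-dosReis}, extending \cite{AIdR07}) keeps the inner $\ud t$-integral over $[\ti,\tip]$ and the outer sum over $i$ intact: bounding $\int_{\ti}^{\tip}|\int_{\ti}^t\nabla Z_r\,\ud W_r|^2\,\ud t\le h_i\sup_{t\le \tip}|\int_{\ti}^t\nabla Z_r\,\ud W_r|^2$ produces the factor $h_i^{p/2}\le|\pi|^{p/2}$ from the length of the interval, while the remaining terms $\IE[(\int_{\ti}^{\tip}|\nabla Z_r|^2\,\ud r)^{p/2}]$ aggregate over the partition, by superadditivity of $x\mapsto x^{p/2}$ for $p\ge 2$, into the single finite quantity $\|\nabla Z\|_{\cH^p}^p$; the drift contribution containing $(1+|Z|)|\nabla Z|$ is treated the same way after Cauchy--Schwarz in time, using $\|Z\|_{\cS^q}<\infty$. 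If you replace the reduction to $(\star)$ by this summed estimate, the rest of your outline (representation formula, H\"older splittings with high conjugate exponents, moments controlled via the BMO property and the reverse H\"older inequality) is the correct toolbox. Note finally that the paper itself only states the theorem and defers the proof to the cited references, so the comparison here is with that source rather than with an in-text proof.
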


Now let $h=T/N$, $\pi^N=\{t_i=ih:i=0,\cdots,N\}$ be an equidistant
partition of $[0,T]$ with $N+1$ points and constant mesh size $h$.
Let $Z$ be the control component in the solution of the qgFBSDE
(\ref{the-sde}), (\ref{the-fbsde}) under {(H1)} and define the
family of random variables
\begin{align}\label{Z-bar-ti-pi}
 \bar{Z}^{\pi^N}_\ti&=\frac1{h}\IE\Big[\int_\ti^\tip
Z_s\uds\big|\cF_\ti\Big],\quad \ti\in\pi^N\setminus\{t_N\}.
\end{align}
For $0\le i\le N-1$ the random variable $\bar{Z}^{\pi^N}_\ti$ is the best $\cF_\ti$-measurable approximation of $Z$ in
$\cH^2([\ti,\tip])$, i.e.
\[
\IE\Big[ \int_\ti^\tip |Z_s-\bar{Z}^{\pi^N}_\ti|^2 \uds\Big]
=
\inf_{\Lambda} \IE\Big[\int_\ti^\tip |Z_s-\Lambda|^2 \uds\Big],
\]
where $\Lambda$ is allowed to vary in the space of all square integrable $\cF_\ti$-measurable random variables.
By constant interpolation we define $\bar{Z}^{\pi^N}_t = \bar{Z}^{\pi^N}_\ti$ for $t\in[\ti, \tip[$, $0\le
i\le N-1.$ It is easy to see that $(\bar Z^{\pi^N}_t)_{t\in[0,T]}$ converges to $(Z_t)_{t\in[0,T]}$ in
$\cH^2[0,T]$ as $h$ vanishes. Since $Z$ is adapted there exists a
family of adapted processes $Z^{\pi^N}$ indexed by our equidistant partitions
such that $Z^{\pi^N}_t=Z_\ti$ for $t\in [\ti,\tip)$ and that $Z^{\pi^N}$ converges to $Z$ in $\cH^2$ as $h$ tends to zero.
Since $\bar Z^{\pi^N}$ is the best
$\cH^2$-approximation of $Z$, we obtain
\[
\| Z-\bar Z^{\pi^N}  \|_{\cH^2}\leq \|Z-Z^{\pi^N} \|_{\cH^2}\to 0,\quad \textrm{as }h\to 0.
\]
The following Corollary of Theorem \ref{theo:path-reg-general-case} extends Theorem 3.4.3 in \cite{phd-zhang}
(see Theorem \ref{zhangs-path-reg-theo}) to the setting of qgFBSDE.
\begin{corollary}[$L^2$-regularity of $Z$]\label{theo:path-reg}
Under {(H1)} and for the sequence of equidistant partitions $(\pi^N)_{N\in\IN}$ of $[0,T]$ with mesh size $h=\frac{T}{N}$, we have
\begin{align*}
\max_{0\leq i\leq N-1}\Big\{\sup_{ t\in [\ti,\tip)} \IE\Big[\,|Y_t-Y_\ti|^2\Big]\Big\}+
\sum_{i=0}^{N-1} \IE\Big[\int_\ti^\tip |Z_s-\bar{Z}^{\pi^N}_\ti|^2\uds \Big]\leq C h,
\end{align*}
where $C$ is a positive constant independent of $N$.
\end{corollary}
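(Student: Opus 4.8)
The plan is to derive the estimate directly from Theorem \ref{theo:path-reg-general-case} specialized to the exponent $p=2$, using in addition the variational (best-approximation) characterization of $\bar{Z}^{\pi^N}_\ti$ to dispose of the first summand. No new analytic machinery is required; the two contributions in the left-hand side correspond precisely to parts i) and ii) of that theorem.

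For the $Y$-term, I would fix $0\le i\le N-1$ and $t\in[\ti,\tip)$ and apply part i) of Theorem \ref{theo:path-reg-general-case} with $p=2$ and $s=\ti$. Since $|Y_t-Y_\ti|^2\le \sup_{\ti\le u\le t}|Y_u-Y_\ti|^2$, this gives $\IE[|Y_t-Y_\ti|^2]\le C_2|t-\ti|\le C_2 h$ uniformly in $t\in[\ti,\tip)$ and in $i$, so that taking the supremum over $t$ and the maximum over $i$ leaves the bound $C_2 h$ intact.

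For the $Z$-term, the key observation is that for each $i$ the random variable $Z_\ti$ is $\cF_\ti$-measurable and square integrable: this is guaranteed by Theorem \ref{DY-continuity-Z-sinfty-bounds}, which provides a continuous version of $Z$ with $\|Z\|_{\cS^p}<\infty$, so the pointwise value $Z_\ti$ is well defined and lies in $L^2$. Consequently $Z_\ti$ is an admissible competitor in the minimization defining $\bar{Z}^{\pi^N}_\ti$, whence by optimality
\[
\IE\Big[\int_\ti^\tip |Z_s-\bar{Z}^{\pi^N}_\ti|^2\uds\Big]\le \IE\Big[\int_\ti^\tip |Z_s-Z_\ti|^2\uds\Big]
\]
for every $i$. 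Summing over $i$ and invoking part ii) of Theorem \ref{theo:path-reg-general-case} with $p=2$ on the equidistant partition $\pi^N$ (so that $|\pi^N|=h$) bounds the right-hand side by $C_2|\pi^N|=C_2 h$.

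Combining the two estimates yields the claimed bound with $C=2C_2$ independent of $N$. I do not expect any genuine obstacle: all the analytic difficulty is concentrated in Theorem \ref{theo:path-reg-general-case} (and, upstream, in the Malliavin-calculus and BMO estimates leading to it). The only point demanding a moment of care is the admissibility of $Z_\ti$ as a comparison element, i.e.\ that $Z$ admits a genuinely adapted, pointwise-defined, square-integrable version; this is exactly what Theorem \ref{DY-continuity-Z-sinfty-bounds} supplies, and without it the replacement of $\bar{Z}^{\pi^N}_\ti$ by $Z_\ti$ would not be justified.
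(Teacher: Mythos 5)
Your proposal is correct and follows exactly the route the paper intends: the corollary is obtained by specializing Theorem \ref{theo:path-reg-general-case} to $p=2$, using the best-approximation property of $\bar{Z}^{\pi^N}_\ti$ (with $Z_\ti$ as competitor, justified by Theorem \ref{DY-continuity-Z-sinfty-bounds}) to pass from $Z_\ti$ to $\bar{Z}^{\pi^N}_\ti$ in part ii). No discrepancy with the paper's argument.
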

\begin{remark}
The above corollary still holds if (H1) is weakened.
More precisely, the corollary's statement remains valid if one replaces in (H1) the sentence
\[
\textrm{``}g:\R^m\to\R\textrm{ is a continuously differentiable function satisfying }|\nabla g|\leq M.\textrm{''}
\]
by
\[
\textrm{``}g:\R^m\to\R\textrm{ is uniformly Lipschitz continuous in all its variables}.\textrm{''}
\]
The proof requires a regularization argument.
\end{remark}

\section{A truncation procedure}\label{section-trunc-procedure}

To the best of our knowledge so far none of the usual discretization
schemes for FBSDE has been shown to converge in the case of systems
of FBSDE considered in this paper, the driver of which is of
quadratic growth in the control variable. The regularity results
derived in the preceding section have the potential to play a
crucial role in numerical approximation schemes for qgFBSDE. We
shall now give arguments to substantiate this claim. In fact, the
regularity of the control component of the solution processes of our
BSDE will lead to precise estimates for the error
committed in truncating the quadratic growth part of the driver. We
will next explain how this truncation is done in our setting.

We start by introducing a sequence of real valued functions
$(\tilde{h}_n)_{n\in\IN}$ that truncate the identity on the real
line. For $n\in\IN$ the map $\tilde{h}_n$ is continuously
differentiable and satisfies \bit
\item $\tilde{h}_n\to \mbox{id}$ locally uniformly, $|\tilde{h}_n|\leq |\mbox{id}|$ and $|\tilde{h}_n|\leq n+1$;
moreover
\begin{equation*}
\tilde{h}_n(x)= \left\{
\begin{array}{cl}
(n+1)&,x> n+2,\\
x&,|x|\leq n,\\
-(n+1)&,x<-(n+2);
\end{array}
\right.
\end{equation*}
\item the derivative of $\tilde{h}_n$ is absolutely bounded by $1$ and converges to $1$ locally uniformly.
\eit We remark that such a sequence of functions exists. The above
requirements are for instance consistent with
\begin{displaymath}
\tilde{h}_n(x)= \left\{
\begin{array}{cl}
\big(-n^2+2nx-x(x-4)\big)/4&,x\in[n,n+2],\\
\big(n^2+2nx+x(x+4)\big)/4&,x\in[-(n+2),-n].\\
\end{array}
\right.
\end{displaymath}
We then define $h_n: \IR^d\to \IR^d$ by $z\mapsto
h_n(z)=(\tilde{h}_n(z_1),\cdots, \tilde{h}_n(z_d))$, $n\in\IN$. The
sequence $(h_n)_{n\in\IN}$ is chosen to be continuously
differentiable because the properties stated in Theorem
\ref{DY-continuity-Z-sinfty-bounds} need to hold for the solution
processes of the family of FBSDE that the truncation sequence
generates by modifying the driver according to the following
definition.

Recalling the driver $f$ of BSDE (\ref{the-fbsde}), for $n\in\IN$ we
define $f_n(t, x, y, z) := f(t,x,y,h_n(z))$, $(t,x,y,z) \in
[0,T]\times\IR^m\times \IR\times \IR^d$. With this driver and
(\ref{the-sde}) we obtain a family of truncated BSDE by
\begin{equation} \label{tr:quad-bsde-truncated}
Y^n_t = g(X_T) +\int_t^T f_n\big(s,X_s,Y^n_s,Z^n_s\big)   \uds
-\int_t^T Z^n_s\udws, \quad t\in[0,T], n\in\IN.
\end{equation}

The following Theorem proves that the truncation error leads to a
polynomial deviation of the corresponding solution processes in
their natural norms, formulated for polynomial order 12.

\begin{theorem}\label{theo-trunc-conv-rate}
Assume that {{(H1)}} is satisfied. Fix $n\in\IN$ and let $X$ be the
solution of (\ref{the-sde}). Let $(Y,Z)$ and $(Y^n,Z^n)_{n\in\IN}$
be the solution pairs of (\ref{the-fbsde}) and
(\ref{tr:quad-bsde-truncated}) respectively. Then for all $p\geq 2$
there exists a positive constant $C_{p}$  such that for all
$n\in\IN$
\[
\IE\Big[\sup_{t\in[0,T]} |Y^n_t-Y_t|^{p}\Big]+\IE\Big[\Big(\int_0^T
|Z^n_s-Z_s|^2\ud s\Big)^{\frac{p}2}\Big]\leq C_{p}\, \frac1{n^{12}}.
\]
\end{theorem}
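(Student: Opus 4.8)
The plan is to derive a linear BSDE for the difference $(\delta Y,\delta Z):=(Y^n-Y,Z^n-Z)$ and to control it by the combined machinery of BMO martingales, a Girsanov change of measure, and the reverse H\"older inequality; the truncation will enter only through an inhomogeneous source term whose moments I can force to decay at any prescribed polynomial rate. First I would set up the difference equation. Since \eqref{the-fbsde} and \eqref{tr:quad-bsde-truncated} share the terminal value $g(X_T)$, the pair $(\delta Y,\delta Z)$ solves
\begin{align*}
\delta Y_t = \int_t^T \big[f_n(s,X_s,Y^n_s,Z^n_s) - f(s,X_s,Y_s,Z_s)\big]\uds - \int_t^T \delta Z_s \udws,
\end{align*}
and I would split the integrand as $f_n(s,X_s,Y^n_s,Z^n_s) - f(s,X_s,Y_s,Z_s) = A_s\,\delta Y_s + B_s\cdot\delta Z_s + \Delta_n(s)$, where $A_s,B_s$ linearise the Lipschitz differences of $f_n$ in $(y,z)$ and
\begin{align*}
\Delta_n(s) := f(s,X_s,Y_s,h_n(Z_s)) - f(s,X_s,Y_s,Z_s)
\end{align*}
is the genuine truncation error. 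Using the local Lipschitz bound in {(H0)} together with $|h_n|\le|\mathrm{id}|$ and the fact that $h_n$ is $1$-Lipschitz, one obtains $|A_s|\le M$ and $|B_s|\le M(1+|Z^n_s|+|Z_s|)$.

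The key structural input is that $B*W$ is a BMO martingale with norm uniform in $n$. Indeed the truncated drivers $f_n$ satisfy {(H0)} with the same constant $M$ (because $h_n$ is $1$-Lipschitz and $|h_n|\le|\mathrm{id}|$), so Theorem \ref{theo:moment-estimates-special-class} applied to \eqref{tr:quad-bsde-truncated} yields $Z^n*W\in BMO(\IP)$ with a bound depending only on $T$ and $M$; combined with $Z*W\in BMO$ this gives $\|B*W\|_{BMO}\le C(T,M)$ independently of $n$. I would then change measure via $\ud\IQ/\ud\IP=\cE(B*W)_T$, under which $W^\IQ_t:=W_t-\int_0^t B_s\uds$ is a Brownian motion, and since $|A_s|\le M$,
\begin{align*}
|\delta Y_t| \le e^{MT}\,\IE^\IQ\Big[\,\Xi\;\big|\;\cF_t\Big],\qquad \Xi:=\int_0^T|\Delta_n(s)|\uds.
\end{align*}
Applying Doob's inequality under $\IQ$, together with H\"older's inequality and the reverse H\"older inequality for the densities $\cE(B*W)^{\pm1}$ (Appendix 1), whose exponents and constants are governed by $\|B*W\|_{BMO}$ and hence uniform in $n$, I would pass back to $\IP$ to get, for every $p\ge2$, a bound of the form $\IE[\sup_{t}|\delta Y_t|^p]\le C_p\,\IE[\Xi^{P}]^{p/P}$ for a suitable $P=P(p)\ge p$. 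The accompanying $\cH^p$-estimate for $\delta Z$ follows from the same linear a priori estimate (applying It\^o's formula to $|\delta Y|^2$ and absorbing the $B\cdot\delta Z$ term through the BMO energy inequality).

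It then remains to make $\IE[\Xi^P]$ small. Since $h_n(Z_s)=Z_s$ on $\{|Z_s|\le n\}$, the local Lipschitz bound gives
\begin{align*}
|\Delta_n(s)| \le M(1+2|Z_s|)\,|h_n(Z_s)-Z_s|\,\1_{\{|Z_s|>n\}} \le C\,|Z_s|^2\,\1_{\{|Z_s|>n\}}.
\end{align*}
For any $\alpha>0$ the elementary estimate $\1_{\{|Z_s|>n\}}\le |Z_s|^\alpha n^{-\alpha}$ yields $|\Delta_n(s)|\le C n^{-\alpha}|Z_s|^{2+\alpha}$, hence $\Xi\le C n^{-\alpha}T\sup_{s}|Z_s|^{2+\alpha}$ and
\begin{align*}
\IE[\Xi^P] \le \frac{C}{n^{\alpha P}}\,\IE\Big[\sup_{s\in[0,T]}|Z_s|^{(2+\alpha)P}\Big] = \frac{C}{n^{\alpha P}}\,\|Z\|_{\cS^{(2+\alpha)P}}^{(2+\alpha)P}.
\end{align*}
This is where Theorem \ref{DY-continuity-Z-sinfty-bounds} is decisive: because $Z\in\cS^q$ for \emph{every} $q\ge2$, the right-hand norm is finite for any choice of $\alpha$. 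Combining with the a priori bound of the previous paragraph gives $\IE[\sup_t|\delta Y_t|^p]\le C_p\,n^{-\alpha p}$, and choosing $\alpha=12/p$ produces exactly the rate $n^{-12}$ (any polynomial order is in fact attainable, $12$ being retained because it suffices to dominate the discretisation error of the Lipschitz schemes). The same $\alpha$ delivers the stated bound for $\|\delta Z\|_{\cH^p}$.

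The hard part is the a priori estimate of the second paragraph. Because $f$ is only of quadratic growth in $z$, the linearising coefficient $B_s$ is unbounded and the difference equation is genuinely non-Lipschitz, so the estimate hinges entirely on $B*W$ being BMO \emph{uniformly in} $n$; this rests in turn on the truncated drivers satisfying {(H0)} with $n$-independent constants, and on the reverse H\"older inequality delivering $n$-independent integrability for the change-of-measure densities. Once these uniformities are secured, the truncation-error step is routine given the $\cS^p$-regularity of $Z$.
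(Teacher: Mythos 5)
Your proposal is correct and fills in precisely the argument the paper sketches for this theorem: the paper's one-line indication (estimate the probability that the control exceeds the truncation threshold via Markov's inequality, made possible by the $\cS^p$-bound (\ref{Z-moment-time})) is exactly your pointwise step $\1_{\{|Z_s|>n\}}\le |Z_s|^{\alpha}n^{-\alpha}$ combined with $Z\in\cS^{q}$ for all $q$, while the linearisation/BMO/Girsanov/reverse-H\"older a priori estimate for the difference equation is the standard machinery the paper relies on throughout for qgBSDE. The only cosmetic deviation is that you put the truncation error on $Z$ rather than on $Z^n$ (the paper's sketch mentions the exceedance of $Z^n$); your choice is if anything slightly cleaner, since it requires only the $n$-uniform BMO bound on $Z^n*W$ from Theorem \ref{theo:moment-estimates-special-class} rather than $n$-uniform $\cS^p$-bounds on $Z^n$.
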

The proof of Theorem \ref{theo-trunc-conv-rate} roughly involves
estimating the probability that $Z^n$ exceeds the threshold $n$ as a
function of $n\in\IN$ through Markov's inequality. The application
of Markov's inequality is possible thanks to (\ref{Z-moment-time}).

\section{The exponential transformation method}\label{section-exp-transf}

In the preceding sections we exhibited the significance of path
regularity for the solution of systems of qgFBSDE, in particular the
control component, for their numerical approximation. In this
section we shall discuss an alternative route to path regularity of
solutions in a particular situation that allows for weaker
conditions than in the preceding sections. We will use the
exponential transform known in PDE theory as the Cole-Hopf
transformation. This mapping takes the exponential of the component
$Y$ of a solution pair as the new first component of a solution pair
of a modified BSDE. It makes a quadratic term in the control
variable of the form $z\mapsto \gamma |z|^2$ vanish in the driver of
the new system. The price one has to pay for this approach is a
possibly missing global Lipschitz condition in the variable $y$ for
the modified driver. It is therefore not clear if the
new BSDE is amenable to the usual numerical discretization
techniques. We give sufficient conditions for the transformed driver
to satisfy a global Lipschitz condition. In this simpler setting our
techniques allow an easier access to smoothness results for the
solutions of the transformed BSDE. The Cole-Hopf transformation
being one-to-one, it is clear that regularity results carry over to
the original qgFBSDE.

Under (H0), we consider the transformation $P=e^{\gamma Y}$ and $Q =
\gamma P Z$. It transforms our qgBSDE (\ref{the-fbsde}) with driver
$f$ into the new BSDE
\begin{equation}\label{exp:BSDE}
P_t = e^{\gamma g(X_T)}
       +\int_t^T \Big[\gamma P_s f\Big(s,X_s,\frac{\log P_s}\gamma, \frac{Q_s}{\gamma P_s}\Big)
                        - \frac{1}{2} \frac{|Q|^2_s}{P_s}\Big]\uds
       -\int_t^T Q_s \udws,\quad t\in[0,T].
\end{equation}
Combining (\ref{exp:BSDE}) with SDE (\ref{the-sde}), we see that for
any $p\geq 2$ a unique solution $(X,P,Q)\in \cS^p\times
\cS^\infty\times \cH^p$ of (\ref{the-sde}) and (\ref{exp:BSDE})
exists. The properties of this triple follow from the properties of
the solution $(X,Y,Z)$ of the original qgFBSDE (\ref{the-sde}) and
(\ref{the-fbsde}). For clarity, we remark that since $Y$ is bounded, $P$ is also bounded and bounded away from 0. The latter property
allows us to deduce from the BMO martingale property of $Z*W$ the BMO
martingale property of $Q*W$. For the rest of this section we denote
by $\cK$ a compact subset of $(\delta,+\infty)$ for some constant $\delta\in\IR_+$ in which $P$ takes its values.

The form of the driver in (\ref{exp:BSDE}) indicates that after
transforming drivers of the form of the following hypothesis, we
have good chances to deal with a Lipschitz continuous one.  \bit
\item[{\bf (H0*)}] $\qquad$ Assume that (H0) holds. For $\gamma\in\IR$ let
$f:[0,T]\times\IR^m\times\IR\times\IR^d\to \IR$ be of the form
\[
f(t,x,y,z)=l(t,x,y)+ a(t,z) + \frac\gamma2 |z|^2,
\]
where $l$ and $a$ are measurable, $l$ is uniformly Lipschitz
continuous in $x$ and $y$, $a$ is uniformly Lipschitz continuous and
homogeneous in $z$, i.e. for $c\in\IR, (s,z)\in[0,T]\times \IR^d$ we
have $a(s,cz) = c a(s,z)$; $l$ and $a$ continuous in $t$.
\eit Assumption (H0*) allows us to simplify the BSDE obtained from the
exponential transformation to
\begin{align}\label{transformed-bsde}
P_t = e^{\gamma g(X_T)}
      +\int_t^T F(s, X_s, P_s, Q_s) \uds
      -\int_t^T Q_s \udws,\quad t\in[0,T],
\end{align}
where the driver is defined by
\begin{align}\label{transformed-driver}
F:[0,T]\times \IR^m\times  \cK \times \IR^d&\to \IR,\nonumber \\
(s,x,p,q)&\mapsto\, \gamma p\, l\big(s,x,\frac{\log p}\gamma) +
\gamma p\, a\big(s, \frac{q}{\gamma p}\big).
\end{align}
Thanks to the homogeneity assumption on $a$ our driver
simplifies further. Indeed, we have for $(s,x,p,q)\in [0,T]\times
\IR^m\times \IR\times \IR^d$
\begin{equation}\label{exp:driver-simple}
F(s,x,p,q)= \gamma p\, l\big(s,x,\frac{\log p}\gamma)
 + a\big(s,q\big).
\end{equation}

The terminal condition of the transformed BSDE still keeps the
properties it had in the original setting. Indeed, boundedness of
$g$ is inherited by $\exp(\gamma g)$. Furthermore, if $g$ is
uniformly Lipschitz, then clearly by boundedness of $g$,
the function $e^{\gamma g}$ is uniformly Lipschitz as well.

Let us next discuss the properties of the driver
(\ref{transformed-driver}) in the transformed BSDE. We recall that
since $l$ and $a$ are Lipschitz continuous, there is a constant
$C>0$ such that for all $(s,x,p,q)\in[0,T]\times \IR^m\times \cK
\times \IR^d$ we have
\begin{align*}
|F(s,x,p,q)|&\leq \big|\gamma p\, l\big(s,x,\frac{\log p}\gamma\big)
+ a\big(s, q\big)\big|\\
&\leq C |p|\big(1+|x|+|\log p\,|+|q|\big)\leq C\big(1+|x|+|p|+|q|\big).
\end{align*} This means that $F$ is of linear growth in $x, p$ and
$q$.

To verify Lipschitz continuity properties of $F$ in its variables
$x, p$ and $q$, by (\ref{exp:driver-simple}) and the Lipschitz
continuity assumptions on $a$, it remains to verify that
\[(x,p) \mapsto \gamma p\, l(s,x,\frac{\log p}{\gamma})\] is
Lipschitz continuous in $x$ and $p$, with a Lipschitz constant
independent of $s\in[0,T].$ As for $x$, this is an immediate
consequence of the Lipschitz continuity of $l$ in $x$. For $p$
we have to recall that $p$ is restricted to a compact set $\cK
\subset \IR_+$ not containing 0, to be able to appeal to the
Lipschitz continuity of $l$ in $y$. This shows that $F$ is globally
Lipschitz continuous in its variables $x, p$ and $q$.

We may summarize these observations in the following Theorem.
\begin{theorem}\label{transform:Lipschitz}
Let $f:[0,T]\times\IR^m\times \IR\times \IR^d \to \IR$ be a
measurable function, continuous on $\IR^m\times\IR\times\IR^d$, and
satisfying (H0*). Then $F$ as defined by (\ref{transformed-driver})
is a uniformly Lipschitz continuous function in the spatial
variables.
%
\end{theorem}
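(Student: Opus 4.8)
The plan is to reduce everything to the simplified representation (\ref{exp:driver-simple}), namely $F(s,x,p,q) = \gamma p\, l(s,x,\tfrac{\log p}\gamma) + a(s,q)$, and then to verify Lipschitz continuity in each spatial variable separately, keeping track throughout of the fact that $p$ ranges only over the compact set $\cK \subset (\delta,+\infty)$. The additive term $a(s,q)$ carries no $x$- or $p$-dependence and is uniformly Lipschitz in $q$ directly by the hypothesis in (H0*) that $a$ is uniformly Lipschitz in $z$. Hence the whole problem collapses to showing that $G(s,x,p) := \gamma p\, l(s,x,\tfrac{\log p}\gamma)$ is Lipschitz in $(x,p)$ with a constant independent of $s$.

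For the $x$-variable this is immediate: since $\cK$ is bounded, $|\gamma p|$ is bounded on $\cK$, and the Lipschitz continuity of $l$ in $x$ gives $|G(s,x,p) - G(s,x',p)| \le |\gamma|\,(\sup_{\cK} p)\, L\, |x-x'|$ uniformly in $s$ and $p\in\cK$, where $L$ is the Lipschitz constant of $l$ in $x$. The $p$-variable is where the real work lies, because $p$ enters $G$ in two places — as the multiplicative prefactor $\gamma p$ and inside the argument $\tfrac{\log p}\gamma$ of $l$. I would split the difference by adding and subtracting $\gamma p\, l(s,x,\tfrac{\log p'}\gamma)$, producing a term $\gamma p\,[l(s,x,\tfrac{\log p}\gamma) - l(s,x,\tfrac{\log p'}\gamma)]$ and a term $\gamma(p-p')\,l(s,x,\tfrac{\log p'}\gamma)$. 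The first is controlled by the Lipschitz continuity of $l$ in $y$ combined with the Lipschitz continuity of $\log$ on $\cK$ — valid precisely because $\cK$ is bounded away from $0$, so $p\mapsto\log p$ has derivative $1/p$ bounded by $1/\delta$ — which yields a bound $\lesssim |p-p'|$.

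The main obstacle is the second term, $\gamma(p-p')\,l(s,x,\tfrac{\log p'}\gamma)$: to extract a uniform Lipschitz constant I must bound $|l(s,x,\tfrac{\log p'}\gamma)|$ uniformly in $x\in\IR^m$, $p'\in\cK$ and $s\in[0,T]$. Lipschitz continuity of $l$ in $y$ only controls differences, not the size of $l$ itself, and a priori $l$ could grow in $x$. The resolution uses the original growth bound in (H0) essentially: since $a$ is homogeneous, taking the homogeneity constant $c=0$ gives $a(s,0)=0$, so setting $z=0$ in the decomposition $f=l+a+\tfrac\gamma2|z|^2$ yields $l(s,x,y)=f(s,x,y,0)$, whence $|l(s,x,y)|\le M(1+|y|)$ uniformly in $(s,x)$. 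Because $y=\tfrac{\log p'}\gamma$ stays in a bounded set as $p'$ ranges over the compact $\cK$, this furnishes the required uniform bound on $l$, and the second term is again $\lesssim |p-p'|$.

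Combining the $x$- and $p$-estimates for $G$ with the $q$-Lipschitz bound for $a(s,q)$ then gives joint Lipschitz continuity of $F$ in $(x,p,q)$ with a constant depending only on $\delta$, $\sup\cK$, $M$, $\gamma$ and the Lipschitz constants of $l$ and $a$, in particular independent of $s$, which is exactly the assertion. I expect the bookkeeping for the $x$- and $q$-variables to be routine; the only conceptually delicate point is recognising that the apparent unboundedness of the factor $l(s,x,\tfrac{\log p'}\gamma)$ in $x$ is ruled out by the $x$-free growth bound already imposed in (H0), so that the compactness of $\cK$ does all the remaining work.
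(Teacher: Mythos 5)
Your proof follows essentially the same route as the paper: pass to the simplified form (\ref{exp:driver-simple}), peel off $a(s,q)$ by its Lipschitz continuity, and reduce to showing $(x,p)\mapsto \gamma p\, l(s,x,\frac{\log p}{\gamma})$ is Lipschitz using the boundedness of $\cK$ away from $0$ and $+\infty$. The one place you go beyond the paper is the bound $|l(s,x,y)|=|f(s,x,y,0)|\le M(1+|y|)$, obtained from $a(s,0)=0$ (homogeneity with $c=0$) and the growth condition in (H0); this uniform-in-$x$ bound on the prefactor term $\gamma(p-p')\,l(s,x,\frac{\log p'}{\gamma})$ is genuinely needed for a Lipschitz constant independent of $x$, and the paper's own argument leaves it implicit, so your observation correctly closes that gap rather than deviating from the paper's method.
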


%
%
%

Theorem \ref{transform:Lipschitz} opens another route to tackle
convergence of numerical schemes via path regularity of the control
component of a solution pair of a qgFBSDE system whose driver
satisfies (H0*). Look at the new BSDE after applying the Cole-Hopf
transform. Since it possesses a Lipschitz continuous driver, path
regularity for the control component $Q$ of the transformed BSDE
will follow from Zhang's path regularity result stated in
(\ref{zhangs-path-reg-theo}) provided the driver is
$\frac{1}{2}$-H\"older continuous in time. Of course, by the
smoothness of the Cole-Hopf transform, the control component $Z$ of
the original BSDE will inherit path regularity from $Q$. This way we
circumvent the more stringent assumption (H1) which was made in
section \ref{section-smooth-results}.

In what follows the triples $(X,Y,Z)$ and $(X,P,Q)$ will always
refer to the solution of qgFBSDE (\ref{the-sde}), (\ref{the-fbsde})
and FBSDE (\ref{the-sde}), (\ref{transformed-bsde}) respectively.
\begin{theorem}
Let (H0*) hold. Assume that \[[0,T]\times \IR^m\times \cK
\times\IR^d\ni (s,x,p,q)\mapsto F(s,x,p,q)\in \IR,\] the driver of
BSDE (\ref{transformed-bsde}), is uniformly Lipschitz in $x, p$ and
$q$ and is $\frac12$-H\"older continuous in $s$. Suppose further
that the map $g:\IR^d\to \IR$, as indicated in (H0), is globally
Lipschitz continuous with Lipschitz constant $K$. Let $(X,Y,Z)$ be
the solution of qgFBSDE (\ref{the-sde}), (\ref{the-fbsde}), and
$\varepsilon>0$ be given. There exists a positive constant $C$ such
that for any partition $\pi=\{t_0,\cdots,t_N\}$ with $0=t_0, T=t_N,
t_0<\cdots<t_N$ of the interval $[0,T]$, with mesh size $|\pi|$ we
have
\begin{align*}
&\max_{0\leq i\leq N-1}\Big\{\sup_{ t\in [\ti,\tip)} \IE\Big[\,|Y_t-Y_\ti|^2\Big]\Big\}\leq C|\pi|
\quad\textrm{and}\quad
\sum_{i=0}^{N-1} \IE\Big[\int_\ti^\tip |Z_s-\bar{Z}^\pi_\ti|^2\uds \Big]\leq C |\pi|^{1-\varepsilon}.
\end{align*}
Moreover, if the functions $b$ and $\sigma$ are continuously
differentiable in $x\in\IR^m$ then $t\mapsto Z_t$ is a.s. continuous
in $[0,T]$.
\end{theorem}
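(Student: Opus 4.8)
The plan is to transport the path-regularity result from the transformed BSDE (\ref{transformed-bsde}) back to the original qgFBSDE (\ref{the-sde}), (\ref{the-fbsde}) through the Cole-Hopf map. First I would record the standing setup: under the stated hypotheses, Theorem \ref{transform:Lipschitz} (together with the assumed $\frac12$-H\"older continuity in $s$ and Lipschitz continuity of $e^{\gamma g}$, which follows from the Lipschitz continuity of $g$ and boundedness of $g$) puts the transformed system $(X,P,Q)$ exactly in the framework of Zhang's path regularity theorem (\ref{zhangs-path-reg-theo}). Applying that theorem to $(X,P,Q)$ yields both the $Y$-type time-continuity bound $\max_{0\le i\le N-1}\sup_{t\in[\ti,\tip)}\IE[|P_t-P_\ti|^2]\le C|\pi|$ and the $Z$-type regularity bound $\sum_{i=0}^{N-1}\IE\big[\int_\ti^\tip|Q_s-\bar Q^\pi_\ti|^2\uds\big]\le C|\pi|$. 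This is the workhorse estimate; everything else is a change of variables.

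Next I would translate these bounds back through $P=e^{\gamma Y}$ and $Q=\gamma P Z$, i.e.\ $Y=\frac1\gamma\log P$ and $Z=\frac{1}{\gamma P}Q$. The key structural fact I would exploit is that, since $Y$ is bounded (Theorem \ref{theo:moment-estimates-special-class}), $P$ takes values in the compact set $\cK\subset(\delta,+\infty)$; on $\cK$ the map $p\mapsto\frac1\gamma\log p$ is Lipschitz with a constant depending only on $\delta$ and $\gamma$. Hence $|Y_t-Y_\ti|\le C|P_t-P_\ti|$ pointwise, and the first inequality for $Y$ transfers immediately from the one for $P$. I would also need the $\cS^p$-bounds on $Y,Z$ (hence on $P,Q$) from Theorem \ref{theo:moment-estimates-special-class} and the BMO property of $Z*W$, $Q*W$ noted after (\ref{exp:BSDE}), to control the integrability of the multiplicative factors that appear below.

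The substantive step is the $Z$-regularity inequality, because $Z=\frac{1}{\gamma P}Q$ is a \emph{product} rather than a mere Lipschitz image, so I cannot directly compare $Z_s-\bar Z^\pi_\ti$ with $Q_s-\bar Q^\pi_\ti$. The plan is to write, on each interval $[\ti,\tip)$,
\begin{align*}
Z_s-\frac{1}{\gamma P_\ti}\bar Q^\pi_\ti
 = \frac{1}{\gamma P_s}\big(Q_s-\bar Q^\pi_\ti\big)
   + \Big(\frac{1}{\gamma P_s}-\frac{1}{\gamma P_\ti}\Big)\bar Q^\pi_\ti ,
\end{align*}
bound $\frac{1}{\gamma P_s}$ by a constant using $P\in\cK$, bound the second term using Lipschitz continuity of $p\mapsto p^{-1}$ on $\cK$ together with the already-established time-continuity of $P$, and then replace the non-$\cF_\ti$-measurable test variable $\frac{1}{\gamma P_\ti}\bar Q^\pi_\ti$ by the true best approximation $\bar Z^\pi_\ti$, using that $\bar Z^\pi_\ti$ minimizes the $\cH^2([\ti,\tip])$-distance (the optimality property recorded after (\ref{Z-bar-ti-pi})). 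Controlling the cross term $\big(\frac{1}{\gamma P_s}-\frac{1}{\gamma P_\ti}\big)\bar Q^\pi_\ti$ is where the main obstacle lies: it pairs the $P$-time-increment against $\bar Q^\pi_\ti$, and to sum it over the partition and keep the rate one must apply a H\"older (or Cauchy--Schwarz) split, spending a little integrability on the $Q$-factor via its $\cS^p$/BMO bounds and absorbing the resulting higher moment of $P_s-P_\ti$ through part i) of Theorem \ref{theo:path-reg-general-case}. This split is exactly what forces the arbitrarily small loss $\varepsilon>0$ in the exponent, yielding $\sum_i\IE[\int_\ti^\tip|Z_s-\bar Z^\pi_\ti|^2\uds]\le C|\pi|^{1-\varepsilon}$ rather than the clean rate $|\pi|$. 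Finally, the a.s.\ continuity of $t\mapsto Z_t$ under the extra $C^1$ assumption on $b,\sigma$ follows from the continuity of $t\mapsto Q_t$ (Theorem \ref{DY-continuity-Z-sinfty-bounds} applied in the transformed coordinates, whose hypotheses are met once $b,\sigma$ are differentiable) combined with the a.s.\ continuity and strict positivity of $t\mapsto P_t$, since $Z=\frac{1}{\gamma P}Q$ is then a continuous function of two continuous processes.
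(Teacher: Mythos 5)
Your proposal follows essentially the same route as the paper's proof: apply Zhang's path regularity theorem to the transformed triple $(X,P,Q)$, transfer the $Y$-estimate through the Lipschitz continuity of $\log$ on the compact set $\cK$ bounded away from $0$, and handle the product structure $Z=Q/(\gamma P)$ by an add-and-subtract decomposition combined with the optimality of $\bar{Z}^{\pi}_{t_i}$ and a H\"older split of the cross term, which is exactly where the paper also incurs the $|\pi|^{1-\varepsilon}$ loss. The only inessential differences are that the paper compares against $Z_{t_i}$ (using the term $\sum_i\IE[\int_{t_i}^{t_{i+1}}|Q_s-Q_{t_i}|^2\uds]$ from Theorem \ref{zhangs-path-reg-theo}) rather than against $(\gamma P_{t_i})^{-1}\bar{Q}^{\pi}_{t_i}$, and that for the a.s.\ continuity of $Q$ it invokes Corollary 5.6 of \cite{MZ02-pathreg} rather than Theorem \ref{DY-continuity-Z-sinfty-bounds}, whose hypothesis (H1) would require a continuously differentiable driver that the transformed $F$ need not possess.
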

\begin{proof}
Throughout this proof $C$ will always denote a positive constant the
value of which may change from line to line. Let $(X,P,Q)$ be the
solution of (\ref{the-sde}) and (\ref{transformed-bsde}), where $P$
takes its values in $\cK$ and $Q*W$ is a BMO martingale. Applying
Theorem \ref{zhangs-path-reg-theo} yields a positive constant $C$
such that for any partition $\pi=\{t_0,\cdots,t_N\}$ of $[0,T]$ with
mesh size $|\pi|$
\begin{align*}
&\max_{0\leq i\leq N-1}\Big\{\sup_{ t\in [\ti,\tip)} \IE\Big[\,|P_t-P_\ti|^2\Big]\Big\}+
\sum_{i=0}^{N-1} \IE\Big[\int_\ti^\tip |Q_s-\bar{Q}^\pi_\ti|^2\uds \Big]\leq C |\pi|.
\end{align*}
Since $P$ takes its values in the compact set $\cK\subset \IR_+$ not
containing 0 there exists a constant $C$ such that for any $0\le
i\le N-1, t\in [\ti, \tip)$
\begin{align*}
|Y_t-Y_\ti|=C|\log P_t-\log P_\ti|\leq C |P_t - P_\ti|.
\end{align*}
Using the two above inequalities we have
\begin{align*}
&\max_{0\leq i\leq N-1}\Big\{\sup_{ t\in [\ti,\tip)} \IE\Big[\,|Y_t-Y_\ti|^2\Big]\Big\}\leq C \max_{0\leq i
\leq N-1}\Big\{\sup_{ t\in [\ti,\tip)} \IE\Big[\,|P_t-P_\ti|^2\Big]\Big\}
\leq C |\pi|.
\end{align*} This proves the first inequality. For the second one,
note that by definition for $0\le i\le N-1, t\in [\ti, \tip)$
\begin{align*}
|Z_t-\bar{Z}_\ti|\leq |Z_t-Z_\ti|
&\leq \frac1\gamma\Big \{|\frac{Q_t}{P_t}-\frac{Q_t}{P_\ti}|+|\frac{Q_t}{P_\ti}-\frac{Q_\ti}{P_\ti}|\Big\}
\leq \frac1\gamma \Big\{|Q_t||\frac{1}{P_t}-\frac{1}{P_\ti}|+\frac{1}{|P_\ti|}|Q_t-Q_\ti|\Big\}\\
&\leq C \Big\{ |Q_t|\ |P_t-P_\ti|+|Q_t-Q_\ti|\Big\}.
\end{align*}
We therefore have for $0\le i\le N-1$
\begin{align*}
& \IE\Big[\int_\ti^\tip |Z_s-\bar{Z}^\pi_\ti|^2\uds \Big]\leq  \IE\Big[\int_\ti^\tip |Z_s-Z_\ti|^2\uds \Big]\\
&\qquad \leq 2C\Big\{
\IE\Big[\sup_{t\in[\ti,\tip)}|P_t-P_\ti|^2\,\int_\ti^\tip |Q_s|^2\uds
\Big] +\IE\Big[\int_\ti^\tip|Q_t-Q_\ti|^2\uds \Big]\Big\}.
\end{align*}
Since  $Q\in\cH^p$ for all $p\geq 2$, for any two real numbers
$\alpha,\beta \in(1,\infty)$ satisfying $1/\alpha+1/\beta=1$ we may
continue using H\"older's inequality on the right hand side of the
inequality just obtained, and then Theorem
\ref{zhangs-path-reg-theo} to the term containing $P$. This yields
the following inequality valid for any $0\le i\le N-1$ with a
constant $C$ not depending on $i$
\begin{align*}
 \IE\Big[\int_\ti^\tip |Z_s-\bar{Z}^\pi_\ti|^2\uds \Big]
&\leq C\Big\{\IE\big[\sup_{t\in[\ti,\tip)}|P_t-P_\ti|^{2\alpha}\big]^{\frac1\alpha}
\IE\big[\Big(\int_\ti^\tip |Q_s|^2\uds\Big)^\beta \Big]^\frac1\beta
+|\pi|\Big\}\\
&\leq C\Big\{
\IE\big[\sup_{t\in[\ti,\tip)}|P_t-P_\ti|^{2}\big]^{\frac1\alpha} +
|\pi| \Big\}\leq C\Big\{ |\pi|^{\frac1\alpha} + |\pi|\Big\}.
\end{align*}
Now choose $\alpha = \frac{1}{1-\varepsilon},$ to complete the
claimed estimate.

To prove that $Z$ admits a.s. a continuous version, it is enough to
remark that the Theorem's assumptions imply the conditions of
Corollary 5.6 in \cite{MZ02-pathreg}. The referred result yields
that $Q$ is a.s. continuous on $[0,T]$. Since $P$ is continuous and
bounded away from zero we conclude from the equation $\gamma P Z =
Q$ that $Z$ is a.s. continuous as well.
\end{proof}

\section{Back to the pricing problem}\label{numerics-of-pricing-problem}
We now come back to the numerical valuation of the put option on
kerosene as depicted in example \ref{example:PutOption}. Notations
in the following are adopted from Section
\ref{section-money-applic}. Assume that the put option expires at
$T=1$. Let $R$ and $S$ denote the dynamics for the financial value
of kerosene and heating oil respectively. In particular we assume
both dynamics to be lognormally distributed according to
\begin{align*}
\mathrm{d} R_t
&=\mu(t,R_t) \udt + \sigma(t,R_t)\mathrm{d}W^1_t
 = 0.12\, R_t\,\udt + 0.41\, R_t \,\mathrm{d}W^1_t, \\
\frac{\mathrm{d}S_t}{S_t}
&= \alpha(t,R_t) \udt + \beta(t,R_t)\mathrm{d}W^3_t
 = 0.1 \, \udt + 0.35\, \mathrm{d}W^3_t,
\end{align*}
and we assume the spot price for heating oil to be $s_0 = 173$ money
units (e.g. US Dollar, Euro), see also equations \eqref{eq:nonTrad}
and \eqref{eq:risky}. Risk aversion is set at the level of
$\eta = 0.3$. Figure \ref{pic:TradVsNontrad} displays sample paths
of the kerosene price with a spot price of $r_0=170$ and heating oil price
at different correlation levels using the explicit solution formula
for the geometric Brownian motion. We see that the higher the
correlation, the better the approximation of the kerosene by heating
oil becomes.
\begin{figure}[!hbt]
\centering
\includegraphics[scale=0.5]{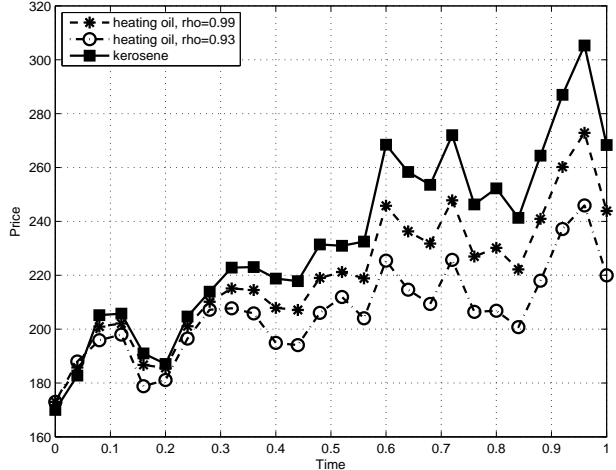}     
\caption{Price paths of the nontradable asset kerosene and the correlated asset heating oil at different correlation levels. The spot of kerosene was set to $r_0 = 170$.}\label{pic:TradVsNontrad}
\end{figure}
We have seen that the valuation of the put option via utility
maximization yields the pricing formula \eqref{eq:indiffPrice} which
in conjunction with Lemma \ref{lemma:Frei} becomes the difference of
two solutions of a qgBSDE with the generator  \eqref{eq:utilDriver}
\begin{align*}
p_t &= Y^F_t - Y^0_t, ~~ 0 \leq t \leq T,
\end{align*}
where $F(x) = (K-x)^+$ for some strike $K>0$. For the numerical
simulation of the qgFBSDE $Y^F$ and $Y^0$, we apply the exponential
transformation to both BSDE (see Section \ref{section-exp-transf})
and then employ the algorithm by \cite{07BD} with $N=100$
equidistant time points, $70000$ paths and a regression basis
consisting of five monomials and the payoff function of the put
option. The Picard iteration stops as soon as the difference of two
subsequent time zero values is less than $10^{-5}$. Simulations
reveal that $12$ to $13$ iterations are needed for solving one
exponentially transformed qgFBSDE.
\begin{figure}[!hbt]
\centering
\subfigure[Put option price in terms varying strikes at a fixed kerosene spot $r_0=170$.]
{
 \label{pic:PriceVsStrike}
  \includegraphics[width=7.3cm]{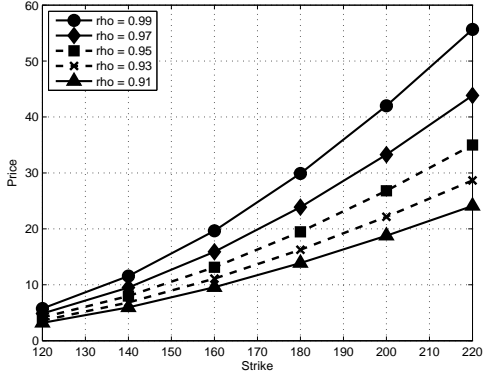}        
}
\hspace{0.2cm}
\subfigure[Put option price in terms of varying kerosene spots at a fixed strike $K=200$.] 
{
 \label{pic:PriceVsSpot}
  \includegraphics[width=7.3cm]{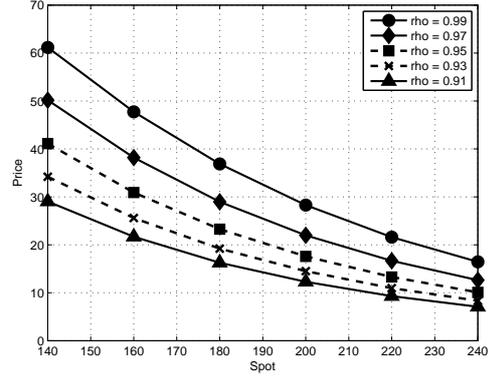}           
}
\caption{Values of the put option in terms of kerosene spot and strike for varying correlations. High correlations lead to high the prices for the contingent claim.}
\end{figure}
Figures \ref{pic:PriceVsStrike} and \ref{pic:PriceVsSpot} depict the
time zero price $p_0$ of the put option at different strike and
kerosene spot levels. The lower the correlation, the lower the price
becomes. This is clear because lower correlations between heating
oil and kerosene lead to higher non-hedgeable residual risk which
diminishes the risk covering effect of the contingent claim and thus
also its value.
\begin{figure}[!hbt]
\centering
\subfigure[Dynamics of the price process $p_t$ for strike $K = 180$.]
{
 \label{pic:PriceDyn}
  \includegraphics[width=7.3cm]{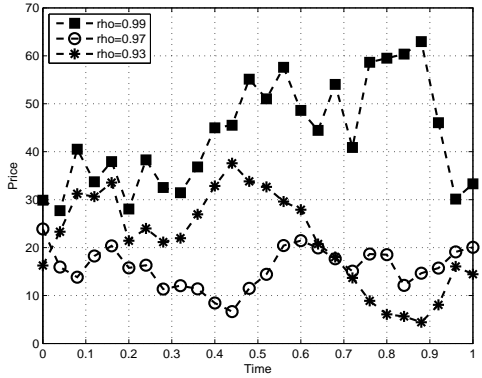}      
}
\hspace{0.5cm}
\subfigure[Dynamics of the optimal investment strategy $\pi_t$ for strike $K= 180$.] 
{
 \label{pic:OptStrategy}
  \includegraphics[width=7.3cm]{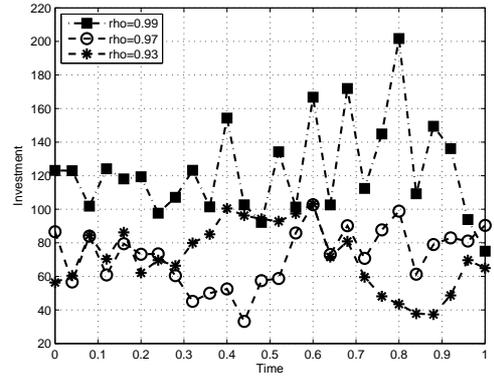}            
}
\caption{Paths of the price $p_t$ and the optimal investment strategy $\pi_t$ for varying correlation levels. In general high correlations entail greater market activity.}
\end{figure}
Figures \ref{pic:PriceDyn} and \ref{pic:OptStrategy} depict sample
paths of the dynamics for the price $p_t$ and the optimal investment
strategy $\pi_t$ for an at the money put with strike $K=180$ and
kerosene spot $r_0 = 170$. The plots depict price and monetary
investment for every fourth time point of the discretization. The
price process and the dynamics of the optimal investment strategy
are intertwined: high fluctuations of the price process result in
high fluctuations of the investment strategy and vice versa. In
general we observe that replication on high correlation levels tends to
entail greater market activity because kerosene price risks can
then be well hedged by market transactions that move closely along
the dynamics of heating oil. In contrast, replication on lower
correlation levels leads to a higher amount of residual risk which is inaccessible for hedging and thus lower market activity is needed.

\section*{Appendix 1 -- Some results on BMO martingales} \label{bmo-subsection}
\addcontentsline{toc}{section}{Appendix 1 -- Some results on BMO martingales}
BMO martingales play a key role for a priori estimates needed in our
sensitivity analysis of solutions of BSDE. For details about this
theory we refer the reader to \cite{Kazamaki1994}.

Let $\Phi$ be a $BMO(\cF, \IQ)$ martingale with $\Phi_0=0$. $\Phi$
being square integrable, the martingale representation Theorem
yields a square integrable process $\phi$ such that $\Phi_t=\int_0^t
\phi_s \udws, t\in[0,T]$. Hence the $BMO(\cF, \IQ)$ norm of $\Phi$
can be alternatively expressed as
\[\sup_{\tau\,\cF-\mbox{\scriptsize stopping time in}\,[0,T]} \IE^\IQ\Big[ \int_\tau^T \phi_s^2 \uds|\cF_\tau \Big]< \infty.\]
\begin{lemma}[Properties of BMO martingales]\label{bmoeigen}
Let $\Phi$ be a $BMO$ martingale. Then we have:
\begin{itemize}
  \item[1)] The stochastic exponential $\cE(\Phi)$ is uniformly
  integrable.
  \item[2)] There exists a number $r>1$ such that $\cE(\Phi_T)\in L^r$. This property follows from the \emph{Reverse H\"older inequality}.
  The maximal $r$ with this property can be expressed explicitly in terms of the BMO norm of $\Phi$.
  \item[3)] If
  $\Phi=\int_0^\cdot \phi_s \uds$
  has BMO norm $C$, then for $p\geq 1$ the following estimate holds
\begin{align*}
\IE[\Big(\int_0^T |\phi_s|^2\uds\Big)^p ]\leq 2p!(4C^2)^p.
\end{align*} Hence $BMO \subset \cH^p$ for all $p\geq 1$.
\end{itemize}
\end{lemma}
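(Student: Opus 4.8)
The plan is to handle the three assertions in order of logical dependence: I would prove the energy estimate (3) from first principles, invoke the reverse H\"older inequality as the single deep external input for (2), and then read off the uniform integrability (1) from (2).

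For (3), write $\Phi_t=\int_0^t\phi_s\udws$ with the integrand furnished by the martingale representation theorem, and set $A_t:=\langle\Phi\rangle_t=\int_0^t|\phi_s|^2\uds$, a continuous increasing process with $A_0=0$; the alternative form of the BMO norm recalled just above says precisely that $\IE[A_T-A_\tau\mid\cF_\tau]\le C^2$ for every stopping time $\tau\le T$, with $C=\|\Phi\|_{BMO}$. I would first establish, by induction on $n\in\IN$, the Garsia-type bound
\[
\IE\big[(A_T-A_\tau)^n\mid\cF_\tau\big]\le n!\,C^{2n}
\]
for all stopping times $\tau\le T$. The case $n=1$ is the BMO bound itself. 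For the step from $n-1$ to $n$ I would use the pathwise identity $(A_T-A_\tau)^n=n\int_\tau^T(A_T-A_s)^{n-1}\,\ud A_s$ (valid since $A$ is continuous), condition on $\cF_\tau$, and replace the integrand by its $\cF_s$-optional projection inside the $\ud A_s$-integral, which is legitimate because $A$ is adapted and increasing. The induction hypothesis bounds that projection by $(n-1)!\,C^{2(n-1)}$, and after pulling this constant out what remains is $n\cdot(n-1)!\,C^{2(n-1)}\,\IE[A_T-A_\tau\mid\cF_\tau]\le n!\,C^{2n}$. Taking $\tau=0$ gives $\IE[A_T^n]\le n!\,C^{2n}$. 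To reach a real exponent $p\ge1$ I would set $n=\lceil p\rceil$ and apply Jensen's inequality to the convex map $x\mapsto x^{n/p}$, obtaining $\IE[A_T^p]^{n/p}\le\IE[A_T^n]\le n!\,C^{2n}$ and hence $\IE[A_T^p]\le(n!)^{p/n}C^{2p}$; since $1\le n\le 2p$, routine bounds on the combinatorial factor absorb it into the stated constant, giving $\IE[(\int_0^T|\phi_s|^2\uds)^p]\le 2\,p!\,(4C^2)^p$ and thus $BMO\subset\cH^p$. I expect this to be entirely mechanical once the recursion is written down; the only point needing care is the optional-projection replacement, which I would justify rather than compute.

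Assertion (2) is where the real work sits and the step I expect to be the genuine obstacle. I would not reprove it: the claim that $\Phi\in BMO$ forces $\cE(\Phi)_T\in L^r$ for some $r>1$, with the optimal $r$ expressible through $\|\Phi\|_{BMO}$, is exactly the reverse H\"older (or $R_r$) inequality for stochastic exponentials of BMO martingales, and I would cite \cite{Kazamaki1994}. Its proof there rests on the John--Nirenberg exponential integrability of BMO martingales together with a self-improving bootstrap that pins down the exponent, and reproducing it lies outside the scope of this appendix.

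Finally I would deduce (1) from (2). The process $\cE(\Phi)$ is a nonnegative continuous local martingale, hence a supermartingale, so it suffices to prove it is a true martingale on $[0,T]$, i.e.\ $\IE[\cE(\Phi)_T]=1$. Choosing a localizing sequence $(\tau_k)$ we have $\IE[\cE(\Phi)_{T\wedge\tau_k}]=1$ for each $k$, and applying the reverse H\"older inequality of (2) to the stopped exponentials gives $\sup_k\IE[\cE(\Phi)_{T\wedge\tau_k}^{\,r}]<\infty$ for some $r>1$. This makes the family $\{\cE(\Phi)_{T\wedge\tau_k}\}_k$ uniformly integrable, so I may pass to the limit to obtain $\IE[\cE(\Phi)_T]=1$; hence $\cE(\Phi)$ is a uniformly integrable martingale, as claimed.
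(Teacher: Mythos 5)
Your proposal is correct; note that the paper itself offers no proof of this lemma at all --- Appendix~1 simply records these facts and defers wholesale to \cite{Kazamaki1994} --- so you are supplying an argument where the authors supply a citation. Your proof of (3) is the standard energy inequality: the telescoping identity $(A_T-A_\tau)^n=n\int_\tau^T(A_T-A_s)^{n-1}\,\ud A_s$, the optional-projection step (legitimate because $A$ is adapted and increasing), and the induction giving $\IE[(A_T-A_\tau)^n\mid\cF_\tau]\le n!\,C^{2n}$ are exactly right, and the interpolation to real $p$ via Jensen with $n=\lceil p\rceil$ does yield the stated constant since $(n!)^{p/n}\le n!\le (p+1)\,p!\le 2\cdot 4^p\,p!$ for $p\ge 1$. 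Citing Kazamaki for (2) matches what the paper does. The one place where your logical ordering differs from the classical development is the deduction $(2)\Rightarrow(1)$: in \cite{Kazamaki1994} the uniform integrability of $\cE(\Phi)$ is obtained first, directly from the John--Nirenberg exponential bound, and the reverse H\"older inequality comes afterwards. Your reversed route is nonetheless valid, but only because the reverse H\"older inequality is quantitative: to get $\sup_k\IE[\cE(\Phi)_{T\wedge\tau_k}^{\,r}]<\infty$ with a \emph{single} $r>1$ you must use that $\|\Phi^{\tau_k}\|_{BMO}\le\|\Phi\|_{BMO}$ and that both the admissible exponent and the $R_r$-constant depend monotonically on the BMO norm alone --- the bare existential statement of (2) applied to each stopped process separately would not suffice. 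Since the statement of (2) explicitly records this quantitative dependence, your argument closes, and the conclusion that $\IE[\cE(\Phi)_T]=1$, hence that $\cE(\Phi)_t=\IE[\cE(\Phi)_T\mid\cF_t]$ is a uniformly integrable martingale on $[0,T]$, is correct.
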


\section*{Appendix 2 -- Basics of Malliavin's calculus} \label{malliavin-calculus}
\addcontentsline{toc}{section}{Appendix 2 -- Basics of Malliavin's calculus}

We briefly introduce the main notation of the stochastic calculus of
variations also known as Malliavin's calculus. For more details, we
refer the reader to \cite{nualart2006}. Let ${\bf \cS}$ be the space
of random variables of the form
\[
\xi = F\Big((\int_0^T h^{1,i}_s \ud W^1_s)_{1\le i\le
n},\cdots,(\int_0^T h^{d,i}_s \ud W^d_s)_{1\le i\le n})\Big),
\] where $F\in C_b^\infty(\R^{n\times d})$, $h^1,\cdots,h^n\in L^2([0,T]; \R^d)$, $n\in\IN.$
To simplify notation, assume that all $h^j$ are written as row
vectors.
For $\xi\in {\bf \cS}$, we define $D = (D^1,\cdots, D^d):{\bf \cS}\to L^2(\Omega\times[0,T])^d$ by
\[
D^i_\theta \xi = \sum_{j=1}^n \frac{\partial F}{\partial x_{i,j}}
\Big( \int_0^T h^1_t \ud W_t,\ldots,\int_0^T
h^n_t\ud W_t\Big)h^{i,j}_\theta,\quad 0\leq \theta\leq T,\quad 1\le
i\le d,\]
and for $k\in\IN$ its $k$-fold iteration by
\[D^{(k)} = (D^{i_1}\cdots D^{i_k})_{1\le i_1,\cdots, i_k\le d}.\]
For $k\in\IN, p\ge 1$ let $\ID^{k,p}$ be the closure of $\cS$ with respect to the norm
\[
\lVert \xi \lVert_{k,p}^p = \IE\Big[\|\xi\|^p_{L^p}+ \sum_{i=1}^{k}\|| D^{(k)]} \xi| \|_{(\cH^p)^i}^p\Big].
\]
$D^{(k)}$ is a closed linear operator on the space $\mathbb{D}^{k,p}$. Observe
that if $\xi\in \ID^{1,2}$ is $\cF_t$-measurable then $D_\theta \xi=0$ for $\theta
\in (t,T]$. Further denote $\mathbb{D}^{k,\infty}=\cap_{p>1}\mathbb{D}^{k,p}$.

We also need Malliavin's calculus for smooth stochastic processes with values in $\R^m.$ For $k\in\IN, p\ge 1,$ denote by $\mathbb{L}_{k,p}(\R^m)$ the
set of $\R^m$-valued progressively measurable processes
$u = (u^1,\cdots, u^m)$ on $[0,T]\times \Omega$ such that
\begin{itemize}
\item[i)]
For Lebesgue-a.a. $t\in[0,T]$, $u(t,\cdot)\in(\mathbb{D}^{k,p})^m$;
\item[ii)] $[0,T]\times \Omega \ni (t,\omega)\mapsto D^{(k)} u(t,\omega)\in (L^2([0,T]^{1+k}))^{d\times n}$ admits a progressively measurable
version;
\item[iii)] $\lVert u \lVert_{k,p}^p=  \|u\|_{\cH^p}^p+\sum_{i=1}^{k} \| \,D^i u\,\|_{(\cH^p)^{1+i}}^p\,<\infty$.
\end{itemize}
%
Note that Jensen's inequality gives for all $p\geq 2$
\[
\IE\Big[\Big( \int_0^T \int_0^T|D_u X_t|^2\ud u\,\ud t\Big)^{\frac{p}{2}} \Big]
\leq
T^{p/2-1}\int_0^T  \| D_u X\|_{\cH^p}^p
\ud u.
\]


\section*{Appendix 3 -- Some results on SDE}\label{results-on-sdes}
\addcontentsline{toc}{section}{Appendix 3 -- Some results on SDE}
We recall results on SDE known from the literature that are relevant
for this work. We state our assumptions in the multidimensional
setting. However, for ease of notation we present some formulas in
the one dimensional case.
\begin{theorem}[Moment estimates for SDE]
Assume that {(H0)} holds. Then (\ref{the-sde}) has a unique solution
$X\in\cS^2$ and the following moment estimates hold: for any $p\geq
2$ there exists a constant $C>0$, depending only on $T$, $K$ and $p$
such that for any $x\in\R^m, s,t\in[0,T]$
\begin{align*}
\IE[\,\sup_{0\leq t\leq T} |X_t|^p\,]&\leq C\IE\Big[\, |x|^p+\int_0^T \big(|b(t,0)|^p+|\sigma(t,0)|^p \big)\ud t \Big],\\
\IE[\sup_{s\leq u\leq t}|X_u-X_s|^p\,]&\leq C \IE\Big[\,|x|^p+\sup_{0\leq t\leq T}\big\{|b(t,0)|^p+|\sigma(t,0)|^p\big\}  \Big]\,|t-s|^{p/2}.
\end{align*}
Furthermore, given two different initial conditions $x,x'\in\IR^m$,
we have
\begin{align*}
\IE\Big[\sup_{0\leq t\leq T}| X_t^x-X_t^{x'} |^p\Big]\leq C |x-x'|^p.
\end{align*}
\end{theorem}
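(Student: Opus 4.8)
The plan is to follow the classical route for SDE under global Lipschitz and linear-growth conditions: the three estimates are all consequences of the Burkholder--Davis--Gundy (BDG) inequality, H\"older's inequality and Gr\"onwall's lemma. Existence and uniqueness in $\cS^2$ follow from a standard Picard iteration / Banach fixed-point argument. Under (H0) the coefficients are globally Lipschitz and, since $b(\cdot,0)$ and $\sigma_i(\cdot,0)$ are bounded by $K$, of at most linear growth, so the map $X\mapsto x+\int_0^\cdot b(s,X_s)\,\ud s+\int_0^\cdot \sigma(s,X_s)\,\ud W_s$ is a contraction on $\cS^2$ after introducing an equivalent $e^{\beta t}$-weighted norm (or by iterating on short intervals and patching). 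Since the excerpt explicitly defers SDE theory to this appendix as ``well established'', I would recall this step only briefly.

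For the first moment estimate, fix $p\ge 2$. Before invoking Gr\"onwall one must ensure the $p$-th moments are finite, so first I would introduce the stopping times $\tau_n=\inf\{t\ge 0: |X_t|\ge n\}\wedge T$, work with the stopped process $X_{t\wedge\tau_n}$, and only at the end let $n\to\infty$ via Fatou's lemma; on $[0,\tau_n]$ everything is bounded and the integrals below are legitimate. Taking the supremum over $[0,u\wedge\tau_n]$, using $|a+b+c|^p\le 3^{p-1}(|a|^p+|b|^p+|c|^p)$, H\"older on the drift term (to convert $\int_0^t b$ into a factor $t^{p-1}\int_0^t|b|^p$), and BDG on the stochastic integral (bounding the supremum by $\IE[(\int_0^T|\sigma|^2\,\ud s)^{p/2}]$, then H\"older again), together with the growth bounds $|b(s,X_s)|\le |b(s,0)|+K|X_s|$ and similarly for $\sigma$, yields an inequality of the form
\[
\IE\Big[\sup_{0\le r\le u}|X_{r\wedge\tau_n}|^p\Big]\le C\,\IE\Big[|x|^p+\int_0^T\big(|b(s,0)|^p+|\sigma(s,0)|^p\big)\,\ud s\Big]+C\int_0^u\IE\Big[\sup_{0\le r\le v}|X_{r\wedge\tau_n}|^p\Big]\,\ud v .
\]
Gr\"onwall's lemma gives the stated bound uniformly in $n$, and Fatou finishes.

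For the increment estimate I would subtract, writing $X_u-X_s=\int_s^u b(r,X_r)\,\ud r+\int_s^u \sigma(r,X_r)\,\ud W_r$ for $u\ge s$, and apply the same BDG + H\"older machinery on $[s,u]$: the drift contributes a factor $|u-s|^{p-1}$ and the diffusion, through BDG and H\"older, a factor $|u-s|^{p/2-1}$, the dominant one being $|u-s|^{p/2}$ since $p\ge 2$ and $s,t\in[0,T]$. Controlling $\sup_{r}(1+|X_r|)^p$ by the first estimate produces the claimed $|t-s|^{p/2}$ rate. Finally, for dependence on initial data, the difference $\Delta_t:=X_t^x-X_t^{x'}$ solves an SDE with initial value $x-x'$ and coefficients $b(r,X_r^x)-b(r,X_r^{x'})$, $\sigma(r,X_r^x)-\sigma(r,X_r^{x'})$; global Lipschitz continuity bounds these by $K|\Delta_r|$, so the identical BDG/H\"older/Gr\"onwall argument (now with no linear-growth term, since the parts depending only on $r$ cancel) yields $\IE[\sup_{0\le t\le T}|\Delta_t|^p]\le C|x-x'|^p$. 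The only genuinely delicate point is the localization: a naive application of Gr\"onwall to $\IE[\sup|X_t|^p]$ is circular unless this quantity is already known to be finite, and the stopping-time truncation combined with Fatou's lemma is what makes the argument rigorous. Everything else is a routine combination of standard inequalities.
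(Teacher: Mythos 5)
The paper offers no proof of this theorem: it is stated in Appendix~3 as a result recalled from the well-established literature on SDE, so there is nothing in the text to compare your argument against. Your proof is the standard one that those references contain --- Picard iteration (or a weighted-norm fixed point) for existence and uniqueness in $\cS^2$, then localization by the stopping times $\tau_n$, the Burkholder--Davis--Gundy and H\"older inequalities, Gr\"onwall's lemma and Fatou's lemma for the three estimates --- and it is correct; in particular you rightly identify the localization step as the point that keeps the Gr\"onwall application from being circular.
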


\begin{theorem}[Classical differentiability]\label{theo.SDE-classic-diff}
Assume {(H1)} holds. Then the solution process $X$ of
(\ref{the-sde}) as a function of the initial condition $x\in\IR^m$
is differentiable and satisfies for $t\in[0,T]$
\begin{align}\label{diff-sde}
\nabla X_t &= I_m +\int_0^t \nabla b(X_s) \nabla X_s \uds +\int_0^t
\nabla \sigma(X_s)\nabla X_s\udws,
\end{align}
where $I_m$ denotes the $m\times m$ unit matrix. Moreover, $\nabla
X_t$ as an $m\times m$-matrix is invertible for any $t\in[0,T]$. Its
inverse $(\nabla X_t)^{-1}$ satisfies an SDE and for any $p\geq 2$
there are positive constants $C_p$ and $c_p$ such that
\begin{align*}
\| \nabla X \|_{\cS^p}+\| (\nabla X)^{-1}\|_{\cS^p}&\leq C_p
\end{align*}
and
\begin{align*}
\IE\Big[\sup_{s\leq u\leq t}|(\nabla X_u)-(\nabla X_s)|^p+\sup_{s\leq u\leq t}|(\nabla X_u)^{-1}-(\nabla X_s)^{-1}|^p\,\Big]&\leq c_p \,|t-s|^{p/2}.
\end{align*}
\end{theorem}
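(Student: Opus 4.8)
The plan is to treat this as the classical first-variation analysis for the SDE (\ref{the-sde}), proceeding in four stages: set up the linear SDE governing $\nabla X$, identify its solution as the genuine derivative, construct the inverse $(\nabla X)^{-1}$ by an explicit ansatz, and finally read off the moment and path-regularity bounds from the moment-estimate theorem for SDE proved at the start of this appendix. First I would introduce the candidate first-variation process as the $\IR^{m\times m}$-valued solution $V$ of the linear equation $V_t = I_m + \int_0^t \nabla b(s,X_s) V_s \, ds + \int_0^t \nabla\sigma(s,X_s) V_s \, dW_s$. Under (H1) the coefficients $V\mapsto \nabla b(s,X_s)V$ and $V\mapsto \nabla\sigma_i(s,X_s)V$ are linear, globally Lipschitz in $V$ with a bounded Lipschitz constant, and vanish at $V=0$; hence the moment-estimate theorem yields a unique solution $V\in\cS^p$ for all $p\geq 2$ together with $\|V\|_{\cS^p}\leq C_p$, the right-hand side involving only $|I_m|$ because the inhomogeneous terms $b(\cdot,0),\sigma(\cdot,0)$ are absent from the linear equation.

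Second, to verify that $V=\nabla X$ I would examine the difference quotients $U^{h}_t = h^{-1}(X^{x+he_j}_t - X^{x}_t)$. Writing $b(s,X^{x+he_j}_s)-b(s,X^x_s) = \big(\int_0^1 \nabla b(s, X^x_s + \lambda(X^{x+he_j}_s-X^x_s))\,d\lambda\big)(X^{x+he_j}_s - X^x_s)$, and likewise for $\sigma$, one sees that $U^h$ solves a linear SDE whose coefficients converge, as $h\to 0$, to $\nabla b(s,X_s)$ and $\nabla\sigma(s,X_s)$; here I use the $\cS^p$-stability estimate $\|X^{x+he_j}-X^x\|_{\cS^p}\leq C|h|$ from the same moment-estimate theorem together with the continuity of $\nabla b,\nabla\sigma$. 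A Gronwall/BDG stability argument for linear SDEs then upgrades this to $U^h\to V$ in $\cS^p$, which establishes differentiability of $x\mapsto X^x$ and the announced equation for $\nabla X$.

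Third, for invertibility I would posit the inverse as the solution $\eta$ of $\eta_t = I_m + \int_0^t \eta_s\big(-\nabla b(s,X_s)+\sum_{i=1}^d (\nabla\sigma_i(s,X_s))^2\big)\,ds - \int_0^t \sum_{i=1}^d \eta_s\nabla\sigma_i(s,X_s)\,dW^i_s$, again a linear SDE with bounded multiplicative coefficients, so that $\eta\in\cS^p$ with $\|\eta\|_{\cS^p}\leq C_p$. Applying Itô's product rule to $\eta_t\nabla X_t$, the $dW^i_s$ terms cancel because of the choice $B_i=-\nabla\sigma_i$, while the $dt$ terms cancel identically, the extra drift term $+\sum_i(\nabla\sigma_i)^2$ in the equation for $\eta$ having been chosen precisely to absorb the Itô cross-variation contribution; thus $d(\eta_t\nabla X_t)=0$, and since $\eta_0\nabla X_0 = I_m$ we conclude $\eta_t\nabla X_t=I_m$ for all $t$, i.e. $\nabla X_t$ is invertible with $(\nabla X_t)^{-1}=\eta_t$. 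This simultaneously delivers the SDE for the inverse and the bound $\|(\nabla X)^{-1}\|_{\cS^p}\leq C_p$. The Hölder-type increment estimates for both $\nabla X$ and $(\nabla X)^{-1}$ then follow by writing each increment $\nabla X_u-\nabla X_s$ (resp. $\eta_u-\eta_s$) as a drift integral plus a stochastic integral over $[s,u]$ and applying BDG and Hölder, using the boundedness of $\nabla b,\nabla\sigma$ and the $\cS^p$ bounds just obtained; this produces the factor $|t-s|^{p/2}$.

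I expect the main obstacle to be the second stage, the rigorous passage from difference quotients to the derivative: one must control the convergence of the random, $X$-dependent coefficients of the SDE solved by $U^h$ and then upgrade mere a.s. or in-probability limits to $\cS^p$-convergence through a Gronwall estimate uniform in $h$. The invertibility step is a clean algebraic computation once Itô's product rule is in place, and the moment and path-regularity bounds are direct consequences of the preceding SDE moment-estimate theorem.
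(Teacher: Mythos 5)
Your proposal is correct: the paper itself offers no proof of this theorem --- it appears in Appendix 3, which explicitly only \emph{recalls} standard SDE results from the literature --- and your four-stage argument (linear first-variation SDE, difference quotients upgraded to $\cS^p$-convergence via Gronwall/BDG and dominated convergence for the mean-value coefficients, the inverse ansatz with drift correction $\sum_i(\nabla\sigma_i)^2$ whose It\^o product-rule cancellations check out exactly, and BDG-based increment estimates yielding $|t-s|^{p/2}$) is precisely the canonical proof the paper implicitly invokes. The only technical point worth flagging is that the existence/moment theorem stated in the appendix concerns the SDE (\ref{the-sde}) with deterministic coefficients, so for the linear equations governing $\nabla X$ and its inverse you should appeal to the standard extension to progressively measurable random coefficients that are uniformly Lipschitz --- a routine but necessary adjustment.
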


\begin{theorem}[Malliavin Differentiability]\label{X-malliavin-diff}
Under {(H1)}, $X\in \mathbb{L}_{1,2}$ and its Malliavin derivative
admits a version $(u,t)\mapsto D_u X_t$ satisfying for $0\le u\le
t\le T$ the SDE
\begin{align}\label{mall-diff-of-X}
D_u X_t &= \sigma(X_{u})+\int_u^t \nabla b(X_s) D_u X_s \uds
+\int_u^t \nabla \sigma(X_s) D_u X_s\udws.
\end{align}
Moreover, for any $p\geq 2$ there is a constant $C_p>0$ such that for $x\in\R^m$ and $0\leq v \leq u\leq t\leq s\leq T$
\begin{align*}
\| D_u X\|_{\cS^p}^p&\leq C_p(1+|x|^p),\\
\IE[\, |D_u X_t - D_u X_s|^p ]&\leq  C_p(1+|x|^p)|t-s|^{\frac{p}{2}},\\
\| D_u X-D_v X\|_{\cS^p}^p&\leq  C_p(1+|x|^p)|u-v|^{\frac{p}{2}}.
\end{align*}
By Theorem \ref{theo.SDE-classic-diff}, we have the representation
\begin{align*}
D_u X_t= \nabla X_t (\nabla X_u)^{-1}\sigma(X_u)\1_{[0,u]}(t),\quad \textrm{ for all }u,t\in[0,T].
\end{align*}
\end{theorem}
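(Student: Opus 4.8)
The plan is to proceed in four steps: establish Malliavin differentiability of $X$, derive the linear SDE solved by $D_uX$, identify its solution with the first variation process $\nabla X$, and read off the three moment estimates from the bounds already recorded in Theorem~\ref{theo.SDE-classic-diff} and in the moment estimates of Appendix~3. For differentiability I would use the Picard scheme $X^0\equiv x$, $X^{n+1}_t=x+\int_0^t b(s,X^n_s)\uds+\int_0^t\sigma(s,X^n_s)\udws$. Under {(H1)} the maps $b,\sigma$ are $C^1$ with bounded gradients, so, since Malliavin differentiability is stable under It\^o integration and under composition with $C^1$ functions of bounded derivative (cf.\ \cite{nualart2006}), by induction each $X^n_t\in(\ID^{1,2})^m$ and $D_uX^n$ solves the linear recursion obtained by applying $D_u$ to the Picard step. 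A Gronwall argument shows that $(D_uX^n)_n$ is Cauchy in $L^2(\Omega\times[0,T])$ uniformly in $t$ while $X^n\to X$ in $\cS^2$; closability of the operator $D$ then yields $X_t\in\ID^{1,2}$ with $D_uX_t$ the $L^2$-limit of $D_uX^n_t$.

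\emph{The variation equation and the representation.} Applying $D_u$ to \eqref{the-sde} and using the chain rule $D_u\,b(s,X_s)=\nabla b(X_s)D_uX_s$ together with $D_u\!\int_0^t\sigma(s,X_s)\udws=\sigma(X_u)\1_{\{u\le t\}}+\int_u^t\nabla\sigma(X_s)D_uX_s\udws$ gives \eqref{mall-diff-of-X} for $u\le t$, while $D_uX_t=0$ for $u>t$ because $X_t$ is $\cF_t$-measurable. As $\nabla b,\nabla\sigma$ are bounded, \eqref{mall-diff-of-X} is a linear SDE with bounded coefficients and datum $\sigma(X_u)$ at time $u$. For fixed $u$ I would check that $t\mapsto\nabla X_t(\nabla X_u)^{-1}\sigma(X_u)$ solves exactly this equation on $[u,T]$: it equals $\sigma(X_u)$ at $t=u$, and differentiating the product through the SDE \eqref{diff-sde} for $\nabla X$ from Theorem~\ref{theo.SDE-classic-diff} reproduces the coefficients $\nabla b(X_t)$ and $\nabla\sigma(X_t)$ acting on the process itself. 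Uniqueness for linear SDEs then yields the representation $D_uX_t=\nabla X_t(\nabla X_u)^{-1}\sigma(X_u)$ for $u\le t$ (and $0$ otherwise).

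\emph{The first two estimates.} From the representation the $\cS^p$-bound follows by a threefold H\"older split of $\sup_t|\nabla X_t|\,\sup_u|(\nabla X_u)^{-1}|\,|\sigma(X_u)|$ against the $\cS^{3p}$-bounds of Theorem~\ref{theo.SDE-classic-diff} and the linear growth of $\sigma$ combined with $\IE[\sup_t|X_t|^{3p}]\le C(1+|x|^{3p})$. The time-increment bound follows by writing, for $u\le s\le t$, $D_uX_t-D_uX_s=(\nabla X_t-\nabla X_s)(\nabla X_u)^{-1}\sigma(X_u)$, estimating $\IE[|\nabla X_t-\nabla X_s|^{2p}]\le c_{2p}|t-s|^{p}$ from the $\tfrac12$-H\"older bound of the same Theorem, and pairing it by Cauchy--Schwarz with the $\cS^{2p}$-control of $(\nabla X_u)^{-1}\sigma(X_u)$.

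\emph{Main obstacle: regularity in $u$.} The delicate estimate is the third one, for which, with $v\le u\le t$, I would telescope
\[
(\nabla X_u)^{-1}\sigma(X_u)-(\nabla X_v)^{-1}\sigma(X_v)=(\nabla X_u)^{-1}\big[\sigma(X_u)-\sigma(X_v)\big]+\big[(\nabla X_u)^{-1}-(\nabla X_v)^{-1}\big]\sigma(X_v),
\]
controlling the first bracket by the joint Lipschitz continuity of $\sigma$ together with $\IE[|X_u-X_v|^{p}]\le C(1+|x|^p)|u-v|^{p/2}$, and the second bracket by the $\tfrac12$-H\"older regularity of $u\mapsto(\nabla X_u)^{-1}$ in Theorem~\ref{theo.SDE-classic-diff}; a H\"older split against the $\cS^{p}$-bounds then produces the rate $|u-v|^{p/2}$. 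The technical care here is that $u$ enters both as the restart time of the variation equation and through the datum $\sigma(X_u)$, so the comparison is taken over $t\ge u$ (consistent with the ordering $v\le u\le t$ in the statement), the region on which both $D_uX$ and $D_vX$ are governed by the variation formula. I expect the closability step underpinning $X\in\IL_{1,2}$ to be the conceptually heaviest ingredient, and would defer to \cite{nualart2006} for it should a fully self-contained Picard argument prove too lengthy.
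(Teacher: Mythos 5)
The paper does not actually prove this theorem: it appears in Appendix~3, which is explicitly a recollection of standard SDE results from the literature (essentially Theorem~2.2.1 of \cite{nualart2006} for the Picard-iteration proof of $X\in\IL_{1,2}$, plus the identification of $D_uX$ with the first-variation process), so there is no in-paper argument to compare against. Your reconstruction is the standard route and is correct, including your observation that the increment estimate in $u$ must be read on $t\ge u$, where both $D_uX_t$ and $D_vX_t$ are governed by the representation $\nabla X_t(\nabla X_\cdot)^{-1}\sigma(X_\cdot)$ (and note in passing that the indicator in the paper's displayed representation should read $\1_{[0,t]}(u)$ rather than $\1_{[0,u]}(t)$, consistent with what your derivation actually yields).
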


\section*{Appendix 4 -- Path regularity for Lipschitz FBSDE}\label{other-results-on-FBSDE}
\addcontentsline{toc}{section}{Appendix 4 -- Path regularity for Lipschitz FBSDE}
We state a version of the $L^2$-regularity result for FBSDE
satisfying a global Lipschitz condition. The result which was seen
to be closely related to the convergence of numerical schemes for
systems of FBSDE is due to \cite{phd-zhang}. For our FBSDE system
(\ref{the-sde}), (\ref{the-fbsde}) we assume that $b,\sigma,f,g$ are deterministic measurable functions that are Lipschitz continuous with respect to the spatial variables and $\frac12$-H\"older continuous with respect to time. Furthermore we assume that $\sigma$ satisfies
(\ref{uniform-ellipticity}). Then from \cite{97KPQ} one easily
obtains existence and uniqueness of a solution triple $(X,Y,Z)$ of
FBSDE (\ref{the-sde}), (\ref{the-fbsde}) belonging to $\cS^2\times
\cS^2\times \cH^2$. For a partition $\pi$ of $[0,T]$ define the
process $\bar{Z}^\pi$ as in (\ref{Z-bar-ti-pi}). Then the following
result holds.
\begin{theorem}[Path regularity result of \cite{phd-zhang}]\label{zhangs-path-reg-theo}
Let $(X,Y,Z)\in\cS^2\times \cS^2\times\cH^2$ be the solution of FBSDE (\ref{the-sde}), (\ref{the-fbsde}) in the setting described above. Then there exists $C\in\IR_+$ such that for any partition
$\pi = \{t_0,\cdots,t_N\}$ of the time interval $[0,T]$ with mesh
size $|\pi|$ we have
\begin{align*}
&\max_{0\leq i\leq N-1}\Big\{\sup_{ t\in [\ti,\tip)} \IE\Big[\,|Y_t-Y_\ti|^2\Big]\Big\}\\
&\qquad\quad+\sum_{i=0}^{N-1} \IE\Big[\int_\ti^\tip |Z_s-\bar{Z}^\pi_\ti|^2\uds \Big]
+\sum_{i=0}^{N-1} \IE\Big[\int_\ti^\tip |Z_s-Z_\ti|^2\uds \Big]
\leq C |\pi|.
\end{align*}
\end{theorem}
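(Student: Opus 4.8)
The plan is to establish the two groups of terms separately: the $Y$-regularity, which is elementary, and the $Z$-regularity, which is the substantial part. Throughout I would first work under an auxiliary smoothness assumption and recover the general Lipschitz setting by mollification. Concretely, I would mollify $b,\sigma,f,g$ into smooth coefficients satisfying (H1) with Lipschitz, H\"older and ellipticity constants bounded uniformly in the mollification parameter $\delta$; by the stability theory of \cite{97KPQ} the corresponding solutions $(X^\delta,Y^\delta,Z^\delta)$ converge to $(X,Y,Z)$ in $\cS^2\times\cS^2\times\cH^2$. It then suffices to prove the estimate for the smooth system with a constant $C$ depending only on the uniform bounds, $T$ and the dimension, and to pass to the limit. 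For the smooth system Theorem~\ref{theo:bsde-1d-mall-diff} provides the representation $Z_t=\nabla Y_t(\nabla X_t)^{-1}\sigma(t,X_t)$, which in particular places $Z$ in $\cS^p$ for every $p\ge2$, so that $\sup_{r}\IE[|Z_r|^2]<\infty$. Writing, for $s\le t$,
\[
Y_t-Y_s=-\int_s^t f(r,X_r,Y_r,Z_r)\,\ud r+\int_s^t Z_r\,\mathrm{d}W_r,
\]
taking $L^2$-norms and using the linear growth of $f$ with $X,Y\in\cS^2$, Cauchy--Schwarz and the bound $\sup_r\IE[|Z_r|^2]<\infty$, I obtain $\IE[|Y_t-Y_s|^2]\le C|t-s|$. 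As $t-\ti\le|\pi|$, this bounds the first term of the statement by $C|\pi|$.

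For the control component I would split $Z_s-Z_\ti$ into one increment per factor of the representation,
\begin{align*}
Z_s-Z_\ti&=(\nabla Y_s-\nabla Y_\ti)(\nabla X_s)^{-1}\sigma(s,X_s)
+\nabla Y_\ti\big((\nabla X_s)^{-1}-(\nabla X_\ti)^{-1}\big)\sigma(s,X_s)\\
&\quad+\nabla Y_\ti(\nabla X_\ti)^{-1}\big(\sigma(s,X_s)-\sigma(\ti,X_\ti)\big).
\end{align*}
The last two increments are controlled by the flow regularity of Theorem~\ref{theo.SDE-classic-diff}, namely $\IE[|(\nabla X_s)^{-1}-(\nabla X_\ti)^{-1}|^p]\le C|s-\ti|^{p/2}$ and $\IE[|X_s-X_\ti|^p]\le C|s-\ti|^{p/2}$, together with the $\tfrac12$-H\"older continuity of $\sigma$ in time, its Lipschitz continuity in space, the $\cS^p$-bounds on $\nabla Y$, $(\nabla X)^{-1}$ and $\sigma(\cdot,X)$, and H\"older's inequality. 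Each of these two contributions is of order $|s-\ti|$, so after integrating over $[\ti,\tip]$ and summing over $i$ (with $Nh=T$) they contribute at most $C|\pi|$.

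The remaining term involves $\nabla Y_s-\nabla Y_\ti$, and establishing its time-regularity uniformly in $\delta$ is the main obstacle. Here I would use that $\nabla Y$ itself solves the linear BSDE of Theorem~\ref{theo:1st-order-diff-qgbsde}, whose driver coefficients $\nabla f$ are bounded in the Lipschitz setting and whose terminal datum $\nabla g(X_T)\nabla X_T$ lies in every $L^p$. Running the argument of the first paragraph on this BSDE, now in $L^p$ and with the Burkholder--Davis--Gundy inequality for the martingale part, yields $\IE[|\nabla Y_s-\nabla Y_\ti|^p]\le C|s-\ti|^{p/2}$ for every $p\ge2$. Combined with H\"older's inequality and the $\cS^p$-bounds on the other two factors, the first contribution is then $O(|s-\ti|)$ as well, hence $O(|\pi|)$ after summation. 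This establishes $\sum_{i}\IE[\int_\ti^\tip|Z_s-Z_\ti|^2\,\uds]\le C|\pi|$.

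It remains to treat the best-approximation term and to pass to the limit. Since $\bar Z^\pi_\ti$ is by construction the best $\cF_\ti$-measurable $\cH^2([\ti,\tip])$-approximation of $Z$ and $Z_\ti$ is $\cF_\ti$-measurable, the best-approximation term is dominated by the left-point term just estimated and is also at most $C|\pi|$. Finally, letting $\delta\to0$ and using the $\cH^2$-convergence of $Z^\delta$ transfers the estimates to the original system: the best-approximation functional is convex and lower semicontinuous on $\cH^2$, so it passes to the limit directly, whereas recovering the left-point term requires selecting the continuous-in-time version of $Z$ granted by the uniform ellipticity of $\sigma$. This selection, together with the uniform-in-$\delta$ control of the variational control-gradient $\nabla Z$, are the two most delicate technical points.
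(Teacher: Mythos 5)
There is a genuine gap, and it sits exactly where you locate the ``main obstacle'': the claimed uniform-in-$\delta$ bound $\IE[|\nabla Y^\delta_s-\nabla Y^\delta_\ti|^p]\leq C|s-\ti|^{p/2}$. Running your first-paragraph argument on the linear BSDE for $\nabla Y$ requires $\IE\int_s^t|\nabla Z_r|^2\ud r\leq C(t-s)$, i.e.\ $\sup_r\IE[|\nabla Z_r|^2]\leq C$ with $C$ independent of the mollification parameter. But $\nabla Z$ is a second-order quantity: by the same representation applied to the linear BSDE it involves second derivatives of $g$ (and of $f$), and mollifying merely Lipschitz data makes these blow up like $\delta^{-1}$, so no uniform bound is available. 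Worse, the pointwise conclusion your decomposition would deliver, $\IE[|Z_s-Z_\ti|^2]\leq C|s-\ti|$, is simply \emph{false} under the theorem's hypotheses: take $m=d=1$, $b=0$, $\sigma=1$, $f=0$, $g(x)=\min(|x|,1)$, so that $X=W$, $Z_t=v_x(t,W_t)$ is a square-integrable martingale, and a direct computation near the kink gives $\IE[|Z_T-Z_t|^2]=\IE[Z_T^2]-\IE[Z_t^2]\sim c\sqrt{T-t}$. Hence the constants $C_\delta$ in your intermediate estimate must diverge as $\delta\to 0$, and the limit passage cannot rescue the argument; the same objection invalidates your proposed continuous-version selection for $Z$, which also leans on uniform control of $\nabla Z$.

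Note that the paper does not prove this theorem but imports it from \cite{phd-zhang}, and the mechanism there is precisely what your proof is missing: one never proves pointwise $\frac12$-H\"older regularity of the martingale part, only the partition-summed estimate. Writing $\nabla Y_t=M_t-\int_0^t\langle\nabla f(r,\Theta_r),\nabla\Theta_r\rangle\ud r$ with $M$ a square-integrable martingale, the absolutely continuous part is pointwise H\"older (your argument works for it), while for the martingale part one uses $\IE[|M_s-M_\ti|^2]\leq\IE[|M_\tip-M_\ti|^2]$ for $s\in[\ti,\tip)$ and orthogonality of increments to telescope:
\begin{align*}
\sum_{i=0}^{N-1}\IE\Big[\int_\ti^\tip|M_s-M_\ti|^2\uds\Big]\leq|\pi|\sum_{i=0}^{N-1}\IE\big[|M_\tip-M_\ti|^2\big]=|\pi|\,\IE\big[|M_T-M_0|^2\big]\leq C|\pi|.
\end{align*}
This is why the theorem bounds only the sum over the partition and not each summand at rate $|s-\ti|$. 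Your remaining ingredients --- the elementary $Y$-estimate, the three-term decomposition of $Z_s-Z_\ti$ via $Z_t=\nabla Y_t(\nabla X_t)^{-1}\sigma(t,X_t)$, the flow regularity of Theorem \ref{theo.SDE-classic-diff}, the domination of the $\bar Z^\pi_\ti$ term by the left-point term, and the mollification framework --- are all sound and do mirror the structure of Zhang's proof; replacing the false pointwise H\"older step for $\nabla Y$ by the telescoping estimate above (applied before integrating over $[\ti,\tip]$ and summing) would repair the proof.
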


\bigskip

\section*{Acknowledgements}
Gon\c calo Dos Reis would like to thank both Romuald Elie and Emmanuel Gobet for the helpful discussions. Jianing Zhang acknowledges financial support by IRTG 1339 SMCP.

\bibliographystyle{abbrvnat}                        

\end{document}